\newtheorem{theorem}{Theorem}[section]
\newtheorem{defn}[theorem]{Definition}
\newtheorem{lemma}[theorem]{Lemma}
\newtheorem{corollary}[theorem]{Corollary}
\newtheorem{asmp}[theorem]{Assumption}
\newtheorem{example}[theorem]{Example}
\newtheorem{prob}[theorem]{Problem}
\newtheorem{remark}[theorem]{Remark}
\newtheorem{proposition}[theorem]{Proposition}
\newcommand{\re}{\mathbb{R}}
\newcommand{\R}{\mathcal{R}}
\newcommand{\A}{\mathcal{A}}
\newcommand{\p}{\mathbb{P}}
\newcommand{\Q}{\mathbb{Q}}
\newcommand{\e}{\mathbb{E}}
\newcommand{\F}{\mathcal{F}}
\newcommand{\pspace}{(\Omega, \F,(\F_t),\p)}
\newcommand{\powut}[1]{H^{({#1})}}
\newcommand{\td}{\mathrm{d}}
\newcommand{\cer}{\mathrm{CER}}
\newcommand{\cerd}{\textnormal{\textcolonmonetary}\mathrm{ER}}
\newcommand{\tr}{'}
\DeclareMathOperator{\van}{\xrightarrow[n\to\infty]{}}
\DeclareMathOperator{\vax}{\xrightarrow[x\to\infty]{}}
\newcommand{\ovl}{\overline}
\title{Portfolio optimisation under non-linear drawdown constraints in a semimartingale financial model}
\author{Vladimir Cherny\thanks{e-mail: \texttt{Vladimir.Cherny@maths.ox.ac.uk}}\ \  and Jan Ob\l\'oj\thanks{e-mail:
        \texttt{obloj@maths.ox.ac.uk}; web:
        \texttt{www.maths.ox.ac.uk/$\sim$obloj/}}\smallskip\\
  Mathematical Institute \emph{and}\\ Oxford--Man Institute of Quantitative Finance\\
  University of Oxford,
  Oxford OX1 3LB, UK
}
\date{22 April 2013}
\begin{document}

\maketitle

\begin{abstract} 
A drawdown constraint forces the current wealth to remain above a given function of its maximum to date. We consider the portfolio optimisation problem of maximising the long-term growth rate of the expected utility of wealth subject to a drawdown constraint, as in the original setup of Grossman and Zhou (1993). We work in an abstract semimartingale financial market model with a general class of utility functions and drawdown constraints.
We solve the problem by showing that it is in fact equivalent to an unconstrained problem with a suitably modified utility function. Both the value function and the optimal investment policy for the drawdown problem are given explicitly in terms of their counterparts in the unconstrained problem.
\smallskip\\
\textbf{Keywords:} Portfolio optimisation, Drawdown constraint, Asymptotic growth rate, Az\'ema--Yor processes\smallskip\\
\textbf{MSC (2010)}:  91G10, 60G44, 60G17\smallskip\\
\textbf{JEL Classification}: G11
\end{abstract}

\section{Introduction}
We study portfolio optimisation subject to drawdown constraints. A drawdown constraint specifies that the investor's wealth $V_t$ has to remain above a given function $w$ of its maximum to date: $V_t> w(\sup_{u\leq t} V_u)$. The motivating example is the case of a linear $w$, when the current wealth is always greater than a fixed fraction of its past maximum. Such features are often embedded in investment opportunities available in the financial markets. From the investor's perspective, they offer a partial protection of the realised gains, where the past maximum is viewed as a natural reference point. For a manager who is trading clients' money avoiding large drawdowns is crucial -- typically many investors have a \emph{stop-loss} provision and a large drawdown would result in a sudden withdrawal of capital from the fund, see Chekhlov et al.\ \cite{CUZ:05}.

This problem was originally introduced by Grossman and Zhou \cite{GZ} who considered a power utility investor in a Black-Scholes market who faces a linear drawdown constraint and maximises the long-term (asymptotic) growth rate of the expected utility of her wealth. Grossman and Zhou \cite{GZ} applied the \emph{forward approach} and solved the problem using the dynamic programming principle. Later Cvitani\'c and Karatzas \cite{CK} generalised the setting in \cite{GZ} to a complete $n$-dimensional market with deterministic coefficients. Using martingale theory they were able to link the solution to the optimisation problem with the drawdown constraint to an unconstrained problem which they could solve using the \emph{dual approach} as in Karatzas, Lehoczky and Shreve \cite{KaratzasLehochkyShreve:87}. The initial motivation for our work was to see if a similar link between the constrained and  unconstrained problems could be established in a much greater generality.

Inline with \cite{GZ,CK} we consider maximisation of the asymptotic growth rate of the expected utility of wealth. The idea to look at the long-run optimality has proven to be a powerful tool in solving various portfolio optimisation problems explicitly. It was used by Grossman and Vila \cite{GrossmanVila:92} in a problem with leverage constraints and by Dumas and Luciano \cite{DumasLuciano:91} in presence of transaction costs. It is close to the objectives studied in the \emph{risk-sensitive control}, see Section \ref{sec:examples}. We refer to Guasoni and Robertson \cite{GuasoniRobertson:11} for a more detailed discussion.

In this paper, we effectively solve the long-run continuous time portfolio optimisation problem with drawdown constraints.
More precisely, the main contribution of the paper is an equivalence result: the $w$-drawdown constrained problem with utility $U$ has the same value function as an unconstrained portfolio optimisation problem but with utility $U\circ F_w$, where $F_w$ is given explicitly in terms of $w$. Moreover, the optimal wealth process for the drawdown constrained problem is obtained as an explicit pathwise transformation $M^{F_w}(V^*)$ of the optimal wealth process $V^*$ for the unconstrained problem.
Both the function $F_w$ and the transform $M^{F_w}$ are given in terms of the drawdown constraint $w$ and are independent of the underlying semimartingale model. In consequence, endowing an investor with a drawdown constraint is an effective and model-independent way of encoding her preferences. It is equivalent to modifying her utility function, e.g. a (long-run) power utility investor with risk aversion $\rho\geq 0$, $U(x)=\frac{1}{1-\rho}x^{1-\rho}$, and a linear drawdown $w(x)=\alpha x$, $\alpha\in (0,1)$, has the same value function as an unconstrained investor with risk aversion $\alpha + \rho (1-\alpha)$. Note however that the two investors employ different optimal strategies. The former uses $M^{F_w}(V^*_\rho)$ which satisfies the $\alpha$-drawdown constraint, where $V^*_\rho$ would be her optimal strategy without drawdown constraints. The latter uses $V^*_{\alpha + \rho(1-\alpha)}$, which may have arbitrarily large drawdowns.

Our results hold in an abstract semimartingale model and the investor is endowed with a generic utility function $U$ and a drawdown constraint $w$. Specifically, we only assume that wealth processes are max-continuous (i.e.\ have a continuous running supremum), that $U$ either behaves like a logarithm or dominates a power function, and has a finite asymptotic elasticity as in Kramkov and Schachermayer \cite{KrSch}, and that $w(x)/x\in (0,1)$ is bounded away from $1$. In such a general setting there is little hope to solve portfolio optimisation problems explicitly.
Adding a drawdown constraint, which is a path-dependent constraint on the admissible investment strategies, appears to significantly increase the complexity. Rather surprisingly, our results show that this is not the case: the constrained problem is just as easy, or just as hard, as the analogue portfolio optimisation problem with no constraints. Since we consider the long-run optimality, Guasoni and Robertson \cite{GuasoniRobertson:11} show that the latter can be solved in a rather general diffusion setting, see Section \ref{sec:example_incomplete}.

This paper relies in an essential way on the so-called Az\'ema-Yor processes. They effectively provide us with a bijection between non-negative wealth processes and the wealth processes which satisfy a given drawdown constraint. Az\'ema-Yor martingales have initially appeared in \cite{AY} where they were used to solve the Skorokhod embedding problem. Carraro, El Karoui and Ob\l\'oj \cite{CEO} introduced a more general class of Az\'ema-Yor processes and studied them from an SDE perspective. In particular they investigated their properties in relation to drawdown constraints. These results provided crucial insights for our work. In fact, objects in Cvitani\'c and Karatzas \cite{CK} can be expressed using Az\'ema-Yor processes simplifying greatly the proofs in \cite{CK}, see Section \ref{sec:example_complete}.

Without the methods of our paper the drawdown constraints are in general very hard to study and we are not aware of any works investigating them in the generality considered here. Nevertheless, they have received some attention in the financial literature, which one would expect given their practical significance. Magdon-Ismail and Atiya \cite{Magdon-IsmailAtiya:04} derived results linking the maximum drawdown to the mean return. Chekhlov, Uryasev and Zabarankin \cite{CUZ:05} analysed discrete-time portfolio optimisation where the investors maximises the expected return subject to risk constraints expressed in terms of drawdowns. They reduce the problem to a linear programming problem which can then be solved numerically. 

In the continuous time framework, apart from the early contributions in \cite{GZ} and \cite{CK}, drawdown constraints have recently been considered in setups with consumption. Roche \cite{Roche} investigated maximisation of expected utility of consumption over infinite time horizon for a power utility and under a linear drawdown constraint. Elie and Touzi \cite{ET} generalised this to a general class of utility functions in the setting of zero interest rates obtaining explicit representation of the solution. Subsequently, Elie \cite{Elie:08} analysed the problem of maximising the expected utility of consumption and terminal wealth on a finite time horizon. He did not have explicit formulae but rather represented the value function as the unique (discontinuous) viscosity solution of the Hamilton-Jacobi-Bellman equation. All the above works considered only the Black-Scholes market. It is not clear at present if, and to what extent, our methods extend to such problems.

Finally, we mention that Vecer \cite{Vecer:06} analysed options on drawdowns as a more effective way against portfolio losses than put or lookback options. Further analysis of option sensitivities to drawdowns was presented in Pospisil and Vecer \cite{PospisilVecer:10}.

The paper is organised as follows. Firstly, in Section \ref{sec:market_model}, we introduce the financial
market, give definitions and formulate the main portfolio optimisation problems of interest. In Section \ref{AY_processes}, we discuss the necessary results on Az\'ema-Yor processes. Section \ref{sec:main_num} presents the main result and its proof. It considers the problem with uniform units: the wealth in both the utility function and the drawdown constraint is discounted by the same numeraire. In Section \ref{sec:main_dollars}, we provide our results for utility of ``wealth in dollars" but subject to drawdown condition on the discounted wealth, as in \cite{CK,GZ}. This requires stronger asymptotic assumptions on $U$ and $w$ as well as deterministic interest rates. Section \ref{sec:logut} is devoted to the drawdown constrained optimisation problem with an asymptotically logarithmic utility. Finally, in Section \ref{sec:examples} several examples are presented. We first consider a general market model which admits price deflators as in Karatzas and Kardaras \cite{KarKar} and give sufficient conditions for 
finiteness of the value function. Then in Section \ref{sec:example_complete} we specialise to the complete market with deterministic coefficients and give explicit solutions, extending results in \cite{CK}. Finally, Section \ref{sec:example_incomplete} provides an explicit solution for an incomplete market model.

The Appendix contains some technical lemmas needed in the proofs but which are of independent interest. In particular we show continuity of the value function -- the long-term (asymptotic) growth rate of the expected utility of wealth -- in the utility function $U$ and its invariance under perturbation of $U$ on some initial interval $[0,x_0]$.

\section{Financial market model and portfolio choice problems}
\label{sec:market_model}

We consider a general financial market model with no frictions. 
The dynamics of traded assets are represented by a vector $\tilde S=(\tilde S^0,\tilde S^1,\ldots,\tilde S^d)$ of semimartingales defined on a filtered probability space $\pspace$ satisfying the usual conditions. We write $\ovl X_t:= \sup_{u\leq t} X_u$ for the running supremum of a process $X$. 
We say that a process $X$ is \emph{max-continuous} if $(\ovl X_t)_{t\geq 0}$ is a continuous process.

We fix a baseline asset, or a numeraire, $N=\tilde S^0$, and express all other quantities in units of $N$. The traded assets are then given as $S:=(1, S^1,\ldots, S^d)$, where $S_t^i=\tilde S^i_t/N_t$. It is customary for $N$ to be the (domestic) savings account but this is not necessary. In particular, the market could be spanned by $\tilde S^1,\ldots,\tilde S^d$ and $N$ could be the wealth process of some trading strategy in these assets. We assume that $N$ is max-continuous, strictly positive and $N_0=1$.
 Agents are allowed to invest by trading in the usual self-financing way as long as their wealth processes are \emph{max-continuous} and strictly positive. All wealth processes are expressed in units of $N$:
\begin{defn}\label{def:wealth}
An adapted semimartingale $(V_t)$ is called a \emph{wealth process} if it is strictly positive, $V_t>0$ and $V_{t-}>0$ for all $t\geq 0$ a.s., $V$ is max-continuous and there exists an $(\F_t)$--predictable process $\pi=(\pi^1,\ldots,\pi^d)$ such that $V_t=V_0+\int_0^t \sum_{i=1}^d \pi^i_u \td S^i_u$, where the (vector) integral is assumed to be well-defined. The set of wealth processes with $V_0=v_0$ is denoted $\A(v_0)$.
\end{defn}
We note that in general $\A(v_0)$ depends on our choice of the numeraire $N$ since max-continuity of $V$ and $VN/\tilde S^i$  are not necessarily equivalent. They are if, for example, all the assets $\tilde S^i$ are continuous. Likewise, if we consider the case when $\tilde S^i$ may only have jumps which are negative, totally inaccessible and unbounded, then $\A(v_0)$ is the same for all continuous numeraires and corresponds to wealth processes with no short selling restriction.

Note that so far we have not assumed ``no-arbitrage" in either strong sense of existence of an equivalent martingale measure $\Q$, or in a weaker sense of existence of a benchmark asset $\hat V$ for which all $V/\hat V$ are supermartingales, see Section \ref{sec:examples}. Neither have we made any strong assumptions on the integrability of $(\pi_t)$ which would make wealth processes $\Q$--martingales. Instead we consider utility maximisation in a general setting. As motivated in the Introduction, we are interested in enforcing drawdown constraints.
\begin{defn}\label{def:DDfunc}
We say that $w$ is a \emph{drawdown function} if it is non-decreasing and
\begin{equation}\label{eq:w-assumption}
\exists \alpha_1: 0 < w(x)/x \leq \alpha_1 < 1,\quad x> 0.
\end{equation}
We say that $(V_t)$ satisfies the \emph{$w$--drawdown} ($w$-DD) condition if
$$\min\{V_{t-},V_t\}>w(\sup_{u\leq t}V_u),\quad t\geq 0,\quad a.s.$$
The set of $(V_t)\in \A(v_0)$ which satisfy $w$-DD is denoted $\A^w(v_0)$.
\end{defn}
We stress that wealth is expressed in units of the baseline asset $N$ and hence the drawdown constraint is relative to our choice of $N$.
The canonical example of a drawdown function is: $w(x)=\alpha x$, see Example \ref{ex:lin_dd}. To the best of our knowledge this is the only example which has been considered in the literature, including \cite{CK,Elie:08,ET,GZ,Roche}. We consider a general possibly non-linear drawdown constraint. In particular, Definition \ref{def:DDfunc} allows for a piece-wise constant\footnote{More precisely, we can take $w$ to be piece-wise constant on $[v_0,\infty)$ for any $v_0>0$ and for the drawdown problem we only consider $w(x)$ for $x$ greater than the initial capital.} function $w$ which has the effect that drawdown constraint is updated discretely at times when the wealth process reaches a new threshold. Definition \ref{def:DDfunc} also allows for $w(x)\equiv c>0$ when $\A^w$ corresponds to wealth processes bounded below by a constant $c$.

One of the main aims of this paper is to relate portfolio choice problems with and without a drawdown constraint. The investor is assumed to be maximising the long-term growth rate of her expected utility of wealth. More precisely we consider the following two problems
\begin{prob}\label{pb:utmax_DD}
Given $v_0>0$, a drawdown function $w$ and a function $U$ compute
\begin{equation}\label{eq:opt_prob_DD}
\begin{split}
\cer_{U}^w(v_0) &:= \sup_{V\in \A^w(v_0)}\R_{U}(V),
\\ &\quad \textrm{where}\quad \R_{U}(V):=\limsup_{T\to \infty} \frac{1}{T}\log \e\left[U\left(V_T\right)\right],
\end{split}
\end{equation}
along with the optimal wealth process which achieves the supremum.
\end{prob}
Hereafter $\log$ is extended to $\re\setminus \{0\}$ via $\log(x)=-\log(-x)$ for $x< 0$.

\begin{prob}\label{pb:utmax}
Given $v_0>0$ and a function  $U$ compute
\begin{equation}\label{eq:opt_prob}
\begin{split}
\cer_{U}(v_0) &:= \sup_{X\in \A(v_0)}\R_{U}(X)
\end{split}
\end{equation}
along with the optimal wealth process which achieves the supremum.
\end{prob}

Problem \ref{pb:utmax} can be solved in number of fairly general setups, see Section \ref{sec:examples} below. 
Our aim is to build a direct link between the solution to this problem and the solution to a seemingly more complex Problem \ref{pb:utmax_DD} which features pathwise drawdown constraints. Problem \ref{pb:utmax_DD}, considered in Theorem \ref{thm:main} below, is a greatly generalised version of the problem introduced by Grossman and Zhou \cite{GZ} and analysed later by Cvitani\'c and Karatzas \cite{CK}. Firstly, we allow for an almost arbitrary utility function $U$ and not just the power utility.
Secondly, we consider a general possibly non-linear drawdown constraint. Finally, we work in a general semimartingale financial market model and not a complete Black-Scholes-like model. Working in such a generality we can not hope for an explicit solution to Problem \ref{pb:utmax_DD} as in \cite{CK,GZ}. 
Our main result offers second best: we obtain an explicit formula for the value function and the optimal investment strategy in Problem \ref{pb:utmax_DD} in terms of the value function and the optimal investment strategy in Problem \ref{pb:utmax} but with a suitably modified utility function.

The idea to look at $\R_U(V)$ -- the growth rate of the expected utility -- goes back to Dumas and Luciano \cite{DumasLuciano:91}, Grossman and Vila \cite{GrossmanVila:92} and Grossman and Zhou \cite{GZ}. 
CER above stands for \emph{Certainty Equivalent Rate} and is interpreted as the critical safe rate -- if the investor was offered such (or higher) rate of growth via other investment opportunities she would be happy to abandon the market and move to the alternative investment opportunities.
Similar criterion also appears in the \emph{risk-sensitive control} literature e.g.\ Fleming and Sheu \cite{FleSheu}, see also Section \ref{sec:example_incomplete}. The criterion is designed to capture the long-horizon optimality and is often more tractable\footnote{We note however that it may fail to provide strategies optimal on a finite time horizon, as discussed by Klass and Nowicki \cite{KlassNowicki:05} in the context of drawdown constraints.} than the fixed-horizon utility maximisation of terminal wealth, cf.~Guasoni and Robertson \cite{GuasoniRobertson:11}. 

Note that Problem \ref{pb:utmax_DD} has unified units: both the drawdown and the utility are applied to wealth in units of $N$. In \cite{CK,GZ} the drawdown is relative to $N$ (which is the savings account) but the reward functional is taken of the ``wealth in dollars": $\R_U(VN)$. This introduces further inhomogeneity and is solved in Section \ref{sec:main_dollars} under additional assumptions. Both Sections \ref{sec:main_num} and \ref{sec:main_dollars} consider $U$ which is either always positive or always negative. Utility functions with behaviour similar to logarithm are treated in Section \ref{sec:logut} where \eqref{eq:opt_prob_DD} is modified into \eqref{eq:opt_prob_log}.

Finally, we note that we consider only wealth processes which are strictly positive. However, in view of Lemma \ref{lem:2U}, this is not restrictive and we could allow wealth processes which become zero from some point onwards. Likewise, we impose strict drawdown condition but this could be relaxed. We could allow the drawdown constraint to be hit and the process would remain constant thereafter. Lemma \ref{lem:2U} and Theorem \ref{thm:main} show that such wealth processes can be excluded without affecting the value of problems of long-run utility maximisation considered in this paper.

\begin{remark}
As stated above, in \eqref{eq:opt_prob_DD}--\eqref{eq:opt_prob} and throughout the paper, we extend $\log$ to $\re\setminus \{0\}$ via $\log(x)=-\log(-x)$ for $x<0$. This implies that $\R_{U_1}(X)\geq \R_{U_2}(X)$ if $U_1\geq U_2$ are two functions of the same sign. Note also that $V_t\equiv v_0\in \A^w(v_0)\subset \A(v_0)$ and hence $\cer_{U}\geq \cer_U^w\geq 0$.
\end{remark}

\section{Drawdown constraints and Az\'ema--Yor processes} \label{AY_processes}
We recall now the so--called Az\'ema--Yor processes. We use their properties established in Carraro, El Karoui and Ob\l\'oj \cite{CEO} to build an explicit and model-independent bijection between $\A(v_0)$ and $\A^w(v_0)$. It will be our crucial tool used to relate Problems \ref{pb:utmax}, \ref{pb:utmax_DD} and their solutions. 

\begin{proposition}[Carraro, El Karoui and Ob\l\'oj \cite{CEO}]\label{prop:AY}
Let $F'$ be a locally bounded function, $F(x)=F(x_0)+\int_{x_0}^x F'(u)\td u$, and $(X_t)$ a max-continuous $(\F_t)$--semimartingale. The associated Az\'ema--Yor process $M^F(X)$ is given via
\begin{equation}\label{eq:AYdef}
M^F_t(X):=F(\ovl X_t)-F'(\ovl X_t)(\ovl X_t - X_t) = F(X_0)+\int_0^t F'(\ovl X_u)\td X_u,\ t\geq 0,
\end{equation}
where $\ovl X_t := \sup_{u\leq t} X_u$. Further
\begin{itemize}
\item if $F'\geq 0$ then $\ovl{M^F(X)}_t=F(\ovl X_t)$, $t\geq 0$,
\item if $F'>0$ then $M^K(M^F(X))=X$ with $K=F^{-1}$ the inverse of $F$,
\item if $F$ is concave then $M^F_t(X)\geq F(X_t)$, $t\geq 0$.
\end{itemize}
\end{proposition}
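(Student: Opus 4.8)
The plan is to treat the algebraic expression $M^F_t(X)=F(\ovl X_t)-F'(\ovl X_t)(\ovl X_t-X_t)$ as the definition of the Az\'ema--Yor process and to deduce everything from it; note first that $\ovl X$ is continuous and adapted, hence predictable, so $F'(\ovl X_\cdot)$ is predictable and, being locally bounded, the integral $\int_0^\cdot F'(\ovl X_u)\,\td X_u$ is well defined. The substantive step is the identity between the two expressions in \eqref{eq:AYdef}. I would first prove it for $F\in C^2$ by applying It\^o's formula to $\phi(\ovl X_t,X_t)$ with $\phi(a,b)=F(a)-F'(a)(a-b)$: since $\ovl X$ is continuous and of finite variation, $[\ovl X]\equiv[\ovl X,X]\equiv 0$, and only $\phi_a(a,b)=-F''(a)(a-b)$, $\phi_b(a,b)=F'(a)$ and $\phi_{bb}\equiv 0$ contribute, giving $\td M^F_t(X)=-F''(\ovl X_t)(\ovl X_t-X_t)\,\td\ovl X_t+F'(\ovl X_t)\,\td X_t$; because $\ovl X$ increases only on $\{X=\ovl X\}$ the first term vanishes, and since $M^F_0(X)=F(X_0)$ this is exactly the right-hand side of \eqref{eq:AYdef}. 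For a general locally bounded $F'$ I would mollify, choosing $F'_n\in C^\infty$ with $F'_n\to F'$ a.e.\ and uniformly bounded on compacts, set $F_n(x)=F(X_0)+\int_{X_0}^x F'_n$, apply the $C^2$ case to $F_n$, and pass to the limit — the stochastic integrals converge by the dominated convergence theorem for stochastic integrals and the algebraic expressions converge pointwise using local boundedness and $F_n\to F$ locally uniformly. This last passage to the limit, in particular the a.e.\ evaluation $F'_n(\ovl X_t)\to F'(\ovl X_t)$, is the only genuinely delicate point and is where some care with the chosen version of $F'$ is required; I expect it to be the main obstacle.

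Granting the identity, the three bullet points follow quickly from the formula $M^F_t(X)=F(\ovl X_t)-F'(\ovl X_t)(\ovl X_t-X_t)$. If $F'\ge 0$ then $F$ is non-decreasing and $M^F_t(X)\le F(\ovl X_t)$, so $\ovl{M^F(X)}_t\le F(\ovl X_t)$; for the converse, given $t$ and $\varepsilon>0$ pick $u\le t$ with $X_u>\ovl X_t-\varepsilon$, so $0\le\ovl X_u-X_u<\varepsilon$ and, with $C_t:=\sup_{[X_0,\ovl X_t]}F'<\infty$, $M^F_u(X)\ge F(\ovl X_t-\varepsilon)-C_t\varepsilon\to F(\ovl X_t)$ as $\varepsilon\downarrow 0$, giving equality. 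If moreover $F'>0$ then $F$ is a continuous increasing bijection with inverse $K=F^{-1}$, and $\ovl{M^F(X)}_t=F(\ovl X_t)$ is continuous, so $Y:=M^F(X)$ is a max-continuous semimartingale to which the identity applies; substituting $\ovl Y_t=F(\ovl X_t)$, $K(\ovl Y_t)=\ovl X_t$, $K'(\ovl Y_t)=1/F'(\ovl X_t)$ and $\ovl Y_t-Y_t=F'(\ovl X_t)(\ovl X_t-X_t)$ into the definition of $M^K(Y)$ yields $M^K_t(Y)=\ovl X_t-(\ovl X_t-X_t)=X_t$. Finally, if $F$ is concave the tangent (subgradient) inequality $F(b)\le F(a)+F'(a)(b-a)$ at $a=\ovl X_t$, $b=X_t$ reads $F(X_t)\le F(\ovl X_t)-F'(\ovl X_t)(\ovl X_t-X_t)=M^F_t(X)$.

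In short, all the difficulty is concentrated in the first step (the It\^o computation and its extension to locally bounded $F'$ by approximation); the three listed properties are then one-line consequences of the algebraic form of $M^F(X)$, using respectively $F'\ge 0$, the explicit inversion of the defining formula, and concavity.
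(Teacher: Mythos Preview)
The paper does not itself prove this proposition: it simply records that the statement ``combines Definition 2.1 and Proposition 2.2 in \cite{CEO} while the last property is clear (see also Proposition 4.12 point c) in \cite{CEO})''. There is therefore no argument in the present paper to compare yours against; you have supplied a self-contained proof where the authors defer entirely to the cited reference.

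Your argument is correct. One point worth making explicit in the It\^o step is that $\phi(a,b)=F(a)-F'(a)(a-b)$ is \emph{affine} in $b$, so the It\^o--Meyer formula applied to the possibly discontinuous semimartingale $X$ produces neither a second-order term nor a jump-correction term in the $b$-variable; together with the continuity and finite variation of $\ovl X$, this is why your $C^2$ computation goes through even though only $\ovl X$, not $X$, is assumed continuous. Your derivations of the three bullet points are exactly right --- the third is precisely what the paper calls ``clear'', and the first two are the content of Proposition~2.2 in \cite{CEO}. The mollification step you flag is indeed the only delicate issue, and you have located it accurately: pointwise convergence $F'_n(\ovl X_t)\to F'(\ovl X_t)$ does not follow from a.e.\ convergence of $F'_n\to F'$ alone. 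One way to bypass it is to avoid approximation altogether, using the change-of-variables identity $F(\ovl X_t)-F(X_0)=\int_0^t F'(\ovl X_u)\,\td\ovl X_u$ for absolutely continuous $F$ along the continuous non-decreasing path $\ovl X$, and then observing that $F'(\ovl X_\cdot)$ stays constant on each excursion of $X$ below its running maximum, so that $\int_0^t F'(\ovl X_u)\,\td(\ovl X_u-X_u)$ telescopes to $F'(\ovl X_t)(\ovl X_t-X_t)$. Either route leads to the same identity.
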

The above combines Definition 2.1 and Proposition 2.2 in \cite{CEO} while the last property is clear (see also Proposition 4.12 point c) in \cite{CEO}). 
It is important to observe that Az\'ema--Yor processes automatically satisfy a drawdown property. More precisely, consider $M^F(X)$ for $X\geq 0$ and $F'>0$. Then from \eqref{eq:AYdef}, using first $F'(\ovl X_t)X_t\geq 0$ and then $F(\ovl X_t)=\ovl{M^F(X)}_t$, we obtain 
$$M^F(X)_t\geq F(\ovl X_t)-F'(\ovl X_t)\ovl X_t = w(\ovl{M^F(X)}_t),\quad t\geq 0,$$
where $w(x)=x-K(x)/K'(x)$, $K:=F^{-1}$. The following result shows that we can start with $w$, solve the ODE for $K$ and hence obtain $F$. It builds an explicit and model-independent bijection between $\A(v_0)$ and $\A^w(v_0)$ and will be our main tool in this paper.  
\begin{proposition}\label{prop:DD}
Let $w$ be a function satisfying \eqref{eq:w-assumption} and define
\begin{equation}\label{eq:Kdef}
K_w(x) := v_0 \exp \left( \int_{v_0}^x \frac{1}{u - w(u)} \td u \right), x \geq v_0>0,
\end{equation}
which is continuous and strictly increasing and has a well defined inverse $F_w:=K_w^{-1}:[v_0,\infty)\to [v_0,\infty)$.\\
The mapping $V\to M^{F_w}(V)$ is a bijection between $\A(v_0)$ and $\A^w(v_0)$ with its inverse given by $X\to M^{K_w}(X)$.\\
 For $V\in \A(v_0)$ we have $X:= M^{F_w}(V)\in \A^w(v_0)$ satisfies
\begin{equation}\label{eq:SDEwDD}
\td X_t = \left(X_{t-}-w\left(\ovl{X}_t\right)\right)\frac{\td V_t}{V_{t-}},\ t\geq 0.
\end{equation}
Finally, if $w$ is nondecreasing then $K_w$ is convex and $F_w$ is concave.
\end{proposition}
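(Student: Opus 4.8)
The plan is to verify the claimed properties of $K_w$ in the order they are stated, then deduce the bijection from Proposition \ref{prop:AY}, and finally treat the convexity/concavity claim. First I would observe that under \eqref{eq:w-assumption} the integrand $\frac{1}{u-w(u)}$ is well defined and strictly positive on $[v_0,\infty)$, since $u-w(u)\geq u(1-\alpha_1)>0$; moreover it is locally bounded (indeed locally integrable) because $w$ is non-decreasing and hence locally bounded, so $u-w(u)$ is bounded away from $0$ on compacts. Therefore the integral $\int_{v_0}^x \frac{1}{u-w(u)}\,\td u$ is finite, continuous and strictly increasing in $x$, so $K_w$ is continuous, strictly increasing, with $K_w(v_0)=v_0$. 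Since $u-w(u)\leq u$, the integral is bounded below by $\int_{v_0}^x \frac{\td u}{u}=\log(x/v_0)$, so $K_w(x)\to\infty$ as $x\to\infty$; hence $K_w:[v_0,\infty)\to[v_0,\infty)$ is a bijection and $F_w:=K_w^{-1}$ is well defined, continuous and strictly increasing.

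Next, $K_w$ is absolutely continuous with $K_w'(x)=\frac{K_w(x)}{x-w(x)}>0$ for a.e.\ $x$, which is exactly the ODE solved by the earlier heuristic with $w(x)=x-K(x)/K'(x)$. Applying Proposition \ref{prop:AY} with $F=F_w$ (note $F_w'>0$ since $K_w'>0$, and $F_w'$ is locally bounded because $K_w'$ is bounded away from $0$ on compacts): for $V\in\A(v_0)$ the process $X:=M^{F_w}(V)$ is a max-continuous semimartingale given by $X_t=v_0+\int_0^t F_w'(\ovl V_u)\,\td V_u$, so it lies in $\A(v_0)$ with portfolio $\pi^i_u F_w'(\ovl V_u)$ suitably interpreted; the drawdown computation recalled just before the Proposition (using $\ovl{M^{F_w}(V)}_t=F_w(\ovl V_t)$ and $K_w=F_w^{-1}$) shows $\min\{X_{t-},X_t\}>w(\ovl X_t)$, so $X\in\A^w(v_0)$, and \eqref{eq:SDEwDD} follows by rewriting $F_w'(\ovl V_t)\,\td V_t$ using $V_{t-}=K_w(\ovl X_t)\cdot(\text{ratio})$ — more precisely from $X_{t-}-w(\ovl X_t)=K_w'(F_w(\ovl X_t))^{-1}\cdot(\cdots)$; I would make this precise by substituting $F_w'(y)=1/K_w'(F_w(y))=(F_w(y)-w(F_w(y)))/F_w(y)\cdot$ and noting $F_w(\ovl V_t)=\ovl X_t$, $F_w(\ovl V_t)$ relates $V_{t-}$ to $X_{t-}$. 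Conversely, the third bullet of Proposition \ref{prop:AY} gives $M^{K_w}(M^{F_w}(V))=V$ and $M^{F_w}(M^{K_w}(X))=X$, so the two maps are mutually inverse; one checks $M^{K_w}$ maps $\A^w(v_0)$ into $\A(v_0)$ by the symmetric (inverse) computation, establishing the bijection.

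For the last claim, suppose $w$ is non-decreasing. Then $x-w(x)$ is not necessarily monotone, but $\frac{1}{x-w(x)}$ — hmm, this needs care. Actually I would argue directly: $K_w$ is convex iff $K_w'$ is non-decreasing, and $K_w'(x)=K_w(x)/(x-w(x))$. Since $K_w$ is increasing and positive, it suffices to control the behaviour of $\log K_w'(x)=\log K_w(x)-\log(x-w(x))$. Differentiating formally, $(\log K_w')'(x)=\frac{1}{x-w(x)}-\frac{1-w'(x)}{x-w(x)}=\frac{w'(x)}{x-w(x)}\geq 0$ when $w'\geq 0$; since $x-w(x)>0$, this shows $K_w'$ is non-decreasing wherever $w$ is differentiable, and the general non-decreasing case follows by the same computation interpreting $w'$ as a non-negative measure (or by a standard approximation of $w$ by smooth non-decreasing functions and passing to the limit, using continuity of $K_w$ in $w$). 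Hence $K_w$ is convex, and a convex increasing bijection has a concave inverse, so $F_w$ is concave.

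The main obstacle I expect is the rigorous justification of \eqref{eq:SDEwDD} and of the convexity argument when $w$ is merely non-decreasing (hence possibly non-differentiable, with jumps): one must either work with the measure $\td w$ and be careful about the chain rule $\td(w(\ovl X_t))$ along the increasing process $\ovl X$, or reduce to the smooth case by approximation and invoke stability of Az\'ema–Yor processes. The algebraic identities ($K_w\circ F_w=\mathrm{id}$, the ODE for $K_w'$, the drawdown inequality) are routine given Proposition \ref{prop:AY}; it is the low-regularity bookkeeping that requires attention.
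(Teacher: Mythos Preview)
Your approach is correct and close in spirit to the paper's, but differs in one respect worth noting: the paper does not re-derive the $w$-drawdown property and the SDE \eqref{eq:SDEwDD} from Proposition~\ref{prop:AY} directly. Instead it invokes Theorem~3.4 of Carraro, El~Karoui and Ob\l\'oj \cite{CEO}, which packages precisely these two facts (and the converse direction, i.e.\ that $M^{K_w}(X)\in\A(v_0)$ for $X\in\A^w(v_0)$) as a single statement, with the only verification needed being that $\zeta=\infty$ (which holds since $K_w(\infty)=\infty$). Your route is more self-contained and elementary; the paper's is shorter by outsourcing to \cite{CEO}. The convexity argument is essentially identical in both.

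Your derivation of \eqref{eq:SDEwDD} is the one place that is left too vague and should be cleaned up. The identity you need is simply
\[
X_{t-}-w(\ovl X_t)=F_w'(\ovl V_t)\,V_{t-},
\]
which follows at once: by max-continuity $\ovl V_{t-}=\ovl V_t$, so from \eqref{eq:AYdef} one has $X_{t-}=F_w(\ovl V_t)-F_w'(\ovl V_t)(\ovl V_t-V_{t-})$, while $w(\ovl X_t)=w(F_w(\ovl V_t))=F_w(\ovl V_t)-\ovl V_t\,F_w'(\ovl V_t)$ by the relation $w(x)=x-K_w(x)/K_w'(x)$ with $x=F_w(\ovl V_t)$. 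Subtracting gives the displayed identity, and combining with $\td X_t=F_w'(\ovl V_t)\,\td V_t$ yields \eqref{eq:SDEwDD}. Note also that the strict inequality $\min\{X_{t-},X_t\}>w(\ovl X_t)$ (rather than $\geq$) comes from $V_t,V_{t-}>0$; you should say this explicitly, as the computation preceding the Proposition only gives the weak inequality.
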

\begin{proof}
First note that in Proposition \ref{prop:AY} to define $M^F(X)$ it suffices that $F$ is defined on the range of possible values of $\ovl X$. 
In particular, both $M^{F_w}(V)$ and $M^{K_w}(X)$ in the statement of Proposition \ref{prop:DD} are well defined.

Properties of $K_w$ follow by a straightforward differentiation. Observe that by \eqref{eq:w-assumption} $K_w(x)<\infty$ for all $x>v_0$ and 
$K_w(\infty)=\infty$ so that the inverse $F_w$ is well defined on $[v_0,\infty)$ and $F_w(\infty)=\infty$. Further $F_w$ is increasing so, by Proposition \ref{prop:AY}, if $V\in \A(v_0)$ then the process $X=M^{F_w}(V)$ is well defined, strictly positive and max-continuous. We apply Theorem 3.4 in \cite{CEO}  with $v^*_0 = v_0$. Note that, in the notation therein, we have $\zeta=\infty$ a.s. since $r_w=\infty$ and $K_w(\infty)=\infty$. We conclude that $X$ satisfies the required $w$--drawdown property for all $t\geq 0$ and \eqref{eq:SDEwDD} holds, and hence that $X\in\A^w(v_0)$.

Likewise, for the converse, $X\in \A^w(v_0)$ satisfies the $w$-DD condition for all times and hence we apply Theorem 3.4 in \cite{CEO} with $\zeta=\infty$. 
Finally, the last statement is clear since when $w$ is non-decreasing then, by direct differentiation, so is $K'$.\hfill$\square$
\end{proof}
Note that we have $w(x)=x-K_w(x)/K_w'(x)$, $x\geq v_0$, a property which we will use often below. Strictly speaking, in view of \eqref{eq:Kdef}, this holds for $x>v_0$ and is used to define $K_w'(v_0)$. Likewise we take $F_w'(v_0)=1/K_w'(v_0)$. We adopt this convention hereafter.

By \eqref{eq:AYdef}, in the above it is sufficient to consider $F_w(v)$ for $v\geq v_0$ since $\ovl{V}_t\geq V_0=v_0$. We are free to define $F_w$ on $[0,v_0)$ in any way without affecting $X_t$. As we will see later, any extension which preserves the sign, monotonicity and concavity of $F_w$ will be allowed.
For completeness we specify one such extension by extending $F_w$ for all positive $v$ as follows
\begin{equation}\label{eq:Fdef}
F_w(v) := \left\{\begin{array}{ll} K_w^{-1}(v) & \textrm{if $v \geq v_0$}\\
F_w'(v_0+)(v-v_0) + v_0 & \textrm{if $0 \leq v <v_0$} \end{array} \right.
\end{equation}
so that $F_w(0)=w(F_w(v_0))=w(v_0)>0$ and $F_w$ is increasing and concave on $[0,\infty)$ if $w$ is nondecreasing. 

\begin{example}\label{ex:lin_dd}
Consider a linear drawdown constraint $w(x)=\alpha x$, $\alpha\in (0,1)$. Then $K_w(v)=v_0(v/v_0)^{1/(1-\alpha)}$ and $F_w(x)=v_0(x/v_0)^{1-\alpha}$. For $V\in \A(v_0)$, Proposition \ref{prop:DD} gives us that $X:=M^{F_w}(V)\in \A^w(v_0)$ and an explicit calculation using \eqref{eq:AYdef} gives
$$X_t=M^{F_w}_t(V)=v^{\alpha}_0(\alpha \overline{V}_t^{1-\alpha}+(1-\alpha)\overline{V}_t^{-\alpha}V_t) ,\ t\geq 0,$$
with an analogous expression for $V$ in terms of $X$.

\end{example}

\begin{example}\label{ex:const_dd}
As an extreme example, consider a constant drawdown constraint $w(x)\equiv c \in (0,v_0)$. Then $K_w(v)=\frac{v_0}{v_0-c}v-\frac{cv_0}{v_0-c}$ and $F_w(x)=c+\frac{v_0-c}{v_0}x$. It follows that for $V\in \A(v_0)$ we have
$$X_t=M^{F_w}_t(V)=\frac{v_0-c}{v_0}V_t+c>c.$$
\end{example}
 
The methodology based on Az\'ema--Yor processes, as introduced above, is perfectly suited for our analysis. We note however that it is also possible to study drawdowns, and in particular laws related to the first time a drawdown occurs, by considering $X_t=(X_t-\ovl X_t)+\ovl X_t$ and applying methods of processes of class $\Sigma$, see e.g.\ Cheridito, Nikeghbali and Platen \cite{CNP}.

\section{Main results}\label{sec:main_num}
We are now ready to formulate our main results. The essence of the results is simple and explicit: the $w$--drawdown problem with a utility function $U$ has the same value as the unconstrained problem with the utility function $U\circ F_w$: $\cer^w_{U}=\cer_{U\circ F_w}$, where $w$ and $F_w$ are related by \eqref{eq:Kdef} and \eqref{eq:Fdef}. Further, the optimal wealth process is given by $ M^{F_w}(V^*)$, where $V^*$ is the optimal wealth for the unconstrained problem. Note that these results do not depend on the underlying stochastic market model.
We impose
\begin{asmp}\label{ass:U}
Assume that, for some $\varepsilon>0$, $U$ satisfies either
\begin{equation*}
\frac{U(x)}{x^{\varepsilon}} \vax \infty,\quad \textrm{and $U$ is strictly positive on }(0,\infty),
\end{equation*}
or
\begin{equation*}
 U(x)x^{\varepsilon} \vax 0, \quad \textrm{and $U$ is strictly negative on }(0,\infty).
\end{equation*}

\end{asmp}
This insures that our utility functions are of constant sign and they dominate a power utility. We will further assume that they admit positive finite Asymptotic Elasticity in the sense of Kramkov and Schachermayer \cite{KrSch}. Throughout the paper a \emph{utility function} simply means a non-decreasing concave function and $U'_+$ denotes its right derivative, $U'_+(x) := \lim_{y \downarrow x} \frac{U(y) - U(x)}{y-x}$. Recall Definition \ref{def:wealth}, and in particular that wealth processes are \emph{max-continuous}, and Definition \ref{def:DDfunc}.

\begin{theorem}\label{thm:main} Let $w$ be a drawdown function, $v_0>0$ and $U$ a utility function satisfying Assumption \ref{ass:U} and 
\begin{equation}\label{eq:AEU}
\limsup_{x\to \infty} \frac{xU'_+(x)}{|U(x)|}<\infty.
\end{equation}
Recall that $K_w$ is given by \eqref{eq:Kdef} and let $F_w$ be its inverse extended to $[0,\infty)$ as in \eqref{eq:Fdef} or in any other way which preserves monotonicity, concavity and $F_w(0)>0$. Assume that, for some $\delta>0$, $\cer_{G}(v_0)<\infty$ where $G(x)=U \circ F_{w}(x)$ when $U<0$ and $G(x)=(U \circ F_{w}(x))^{1+\delta}$ when $U>0$.
Then
\begin{equation*}
\cer^w_{U}(v_0)=\cer_{U\circ F_w}(v_0)<\infty
\end{equation*}
and if $V^*\in  \A(v_0)$ achieves the maximum in the unconstrained problem then $M^{F_w}(V^*)\in  \A^w(v_0)$ achieves the maximum in the $w$-drawdown constrained problem.
\end{theorem}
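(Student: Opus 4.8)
The strategy is to exploit the bijection $V \mapsto M^{F_w}(V)$ between $\A(v_0)$ and $\A^w(v_0)$ established in Proposition \ref{prop:DD}, and to show that it (approximately) preserves the functional $\R_U$ after the change of utility $U \mapsto U \circ F_w$. Concretely, for $V \in \A(v_0)$ set $X := M^{F_w}(V) \in \A^w(v_0)$. Since $w$ is nondecreasing, $F_w$ is concave, so Proposition \ref{prop:AY} gives the pointwise lower bound $X_t = M^{F_w}_t(V) \geq F_w(V_t)$. Monotonicity of $x \mapsto \R_U$ in the integrand (for utilities of a fixed sign) then yields $\R_U(X) \geq \R_U(F_w(V)) = \R_{U \circ F_w}(V)$, and taking suprema over $\A(v_0)$ (equivalently over $\A^w(v_0)$ via the bijection) gives one inequality: $\cer^w_U(v_0) \geq \cer_{U \circ F_w}(v_0)$. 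This direction is the easy one and also delivers optimality of $M^{F_w}(V^*)$ once the reverse inequality and the value identity are in place: if $V^*$ is optimal for the unconstrained problem with utility $U \circ F_w$, then $\R_U(M^{F_w}(V^*)) \geq \R_{U\circ F_w}(V^*) = \cer_{U \circ F_w}(v_0) = \cer^w_U(v_0)$, forcing equality.

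The reverse inequality $\cer^w_U(v_0) \leq \cer_{U \circ F_w}(v_0)$ is where the work lies. Given $X \in \A^w(v_0)$, write $V := M^{K_w}(X) \in \A(v_0)$, so $X = M^{F_w}(V)$. We need an upper bound on $U(X_t)$ in terms of $U \circ F_w$ evaluated at something controllable. From \eqref{eq:AYdef}, $X_t = F_w(\ovl V_t) - F_w'(\ovl V_t)(\ovl V_t - V_t) \leq F_w(\ovl V_t) = \ovl{X}_t$ (the running max), and more usefully $X_t \leq F_w(\ovl V_t)$. The issue is that $\ovl V_t$ can be much larger than $V_t$, so a crude bound loses too much. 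The remedy — and this is the technical heart — is to bound $U(\ovl X_t) = U(F_w(\ovl V_t))$ by a multiple of $U(F_w(V_t)) = U(F_w(V_t))$ at the cost of a factor that is subexponential in $T$, using (i) the finite asymptotic elasticity assumption \eqref{eq:AEU} to control how fast $U \circ F_w$ grows, and (ii) a maximal-inequality / Doob-type argument to control $\e[U(\ovl X_T)]$ by $\e[\text{something}(X_T)]$ up to polynomial-in-$T$ or $(1+\delta)$-moment factors — this is exactly where the extra hypothesis $\cer_G(v_0) < \infty$ with $G = (U\circ F_w)^{1+\delta}$ (resp.\ $G = U \circ F_w$ when $U<0$) enters. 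The point is that a factor growing polynomially in $T$, or a $(1+\delta)$-th moment controlled by $\cer_G < \infty$, contributes $0$ to $\limsup_{T\to\infty}\frac1T\log(\cdot)$, so $\R_U(X) \leq \R_{U\circ F_w}(V) \leq \cer_{U\circ F_w}(v_0)$. When $U < 0$ the signs make the estimate more delicate (larger argument means the utility is closer to $0$, i.e.\ larger), so one instead bounds $|U(X_t)|$ from below appropriately; here $X_t \geq w(\ovl X_t) \geq w(\ovl V_t \text{-image})$ and the lower bound $F_w(0) > 0$ keep things away from the singularity of $U$ at $0$.

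I would organize the argument as: (1) set up the bijection and the pointwise bound $X_t \geq F_w(V_t)$, get $\cer^w_U \geq \cer_{U\circ F_w}$; (2) prove a lemma converting the AY representation into the two-sided control $F_w(V_t) \leq X_t$ and $\ovl{X}_t = F_w(\ovl{V}_t)$, plus an elasticity estimate showing $U(F_w(y))/U(F_w(x))$ grows at most polynomially in $y/x$; (3) combine with a maximal inequality and the finiteness hypothesis on $\cer_G$ to absorb the error into a subexponential factor, yielding $\cer^w_U \leq \cer_{U\circ F_w}$; (4) conclude finiteness from $\cer_G(v_0) < \infty$ and transfer optimality. I expect step (3) — quantitatively matching $\e[U(X_T)]$ against $\e[(U\circ F_w)(V_T)]$ while paying only a subexponential price, handling the $U>0$ and $U<0$ cases separately and using \eqref{eq:AEU} together with the moment hypothesis — to be the main obstacle; steps (1) and (4) are essentially bookkeeping, and step (2) is a direct consequence of Propositions \ref{prop:AY}–\ref{prop:DD} together with elementary convexity estimates, likely isolated as auxiliary lemmas (cf.\ Lemma \ref{lem:2U} and the appendix).
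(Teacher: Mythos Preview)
Your easy direction (step (1)) matches the paper exactly, and your step (4) is fine. The problem is step (3): the ``maximal-inequality / Doob-type argument'' you invoke does not exist in this setting. Wealth processes in $\A(v_0)$ are general positive max-continuous semimartingales with no (super)martingale structure assumed, so there is no Doob inequality relating $\e[U(\ovl X_T)]$ to moments of $X_T$ or $V_T$. Nor does the drawdown constraint help pathwise: \eqref{eq:w-assumption} only gives $w(x)/x\le\alpha_1<1$, not a positive lower bound, so $X_t>w(\ovl X_t)$ does not force $\ovl X_t\le C X_t$. Equivalently, if you set $V=M^{K_w}(X)$, then $V_t=K_w'(\ovl X_t)(X_t-w(\ovl X_t))$ can be arbitrarily small compared to $\ovl V_t=K_w(\ovl X_t)$, and the elasticity bound on $U\circ F_w$ only controls ratios $U\circ F_w(y)/U\circ F_w(x)$ polynomially in $y/x$ --- which is useless when $y/x$ is unbounded.

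The paper's device to get around this is a perturbation of the drawdown function, not a maximal inequality. One introduces $w_n(x)=(1+\tfrac1n)w(x)-\tfrac1n x<w(x)$, so any $X\in\A^w(v_0)$ satisfies the strictly weaker $w_n$-drawdown with a \emph{quantitative gap}. Setting $Y^n:=M^{K_{w_n}}(X)\in\A(v_0)$ and using \eqref{eq:AYdef} together with $X_t>w(\ovl X_t)$ gives the pathwise lower bound $Y^n_t\ge \tfrac{1}{n+1}K_{w_n}(\ovl X_t)$. Now asymptotic elasticity (Lemma~\ref{lemma:ae_u}) applied to $U\circ F_{w_n}$ yields $U\circ F_{w_n}(Y^n_t)\ge (n+1)^{-\gamma_n}U(\ovl X_t)\ge (n+1)^{-\gamma_n}U(X_t)$, a constant factor that vanishes under $\frac1T\log$. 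This gives $\cer^w_U\le \cer_{U\circ F_{w_n}}$ for every $n$. The hypothesis $\cer_G<\infty$ is then used, not for a maximal inequality, but to prove the continuity $\cer_{U\circ F_{w_n}}\to\cer_{U\circ F_w}$ via a separate convergence lemma (Lemma~\ref{lem:cnv_cer}). So the role you assigned to $\cer_G<\infty$ is misidentified, and the core trick --- relaxing $w$ to $w_n$ to manufacture a pathwise lower bound on the inverse Az\'ema--Yor transform --- is missing from your plan.
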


\begin{proof} For simplicity, in the proof we write $K=K_w$ and $F=F_w$.
Let $V\in \A(v_0)$ and $X:= M^F(V)$ which is in $\A^w(v_0)$ by Proposition \ref{prop:DD}.
Now, by Proposition \ref{prop:AY} we obtain directly
\begin{equation}\label{eq:first_ineq}
U\left(X_t\right)=U\left(M^F_t\left(V\right)\right) \geq U \left( F\left(V_t\right) \right),\ t\geq 0,
\end{equation}
which readily implies $\R_{U}(X)\geq \R_{U\circ F}(V)$. Taking supremum over all $V \in \mathcal{A}(v_0)$ we conclude
\begin{equation*}
 \cer^w_{U}(v_0) \geq \cer_{U \circ F}(v_0)\ .
\end{equation*}
It follows also that if we had equality and the right hand side was attained by a wealth process $V^*$ then the left hand side is attained by $ M^F(V^*)$, as required.

It remains to establish the reverse inequality. The idea of the proof is to consider a sequence of unconstrained problems whose value functions all dominate $\cer^w_{U}(v_0)$ and converge to $\cer_{U \circ F}(v_0)$. We will do this by relaxing the drawdown constraint $w$ to $w_n$ and considering utility functions $U\circ F_{w_n}$. This will allow us to obtain \eqref{eq:second_ineq} below which reverses the inequality in \eqref{eq:first_ineq}. Let
\begin{equation}\label{eq:wn_and_w}
w_n(x):=(1+\frac{1}{n})w(x)-\frac{1}{n}x= w(x)-\frac{1}{n}\frac{K(x)}{K'(x)}=
x-(1+\frac{1}{n})\frac{K(x)}{K'(x)}, \quad x \geq v_0,
\end{equation}
where the equalities follow from $w(x)=x-K(x)/K'(x)$. We take $n$ large enough so that $w_n(x)$ satisfies \eqref{eq:w-assumption} but we note that $w_n$ may fail to be globally non-decreasing on $(0,\infty)$.

It follows by a direct computation that
$$K_n(x):= K_{w_n}(x)= v_0 \exp \left( \int_{v_0}^x \frac{1}{u - w_n(u)} \td u \right) = v_0^{\frac{1}{1+n}}\left(K(x)\right)^{\frac{n}{1+n}},\quad x\geq v_0,$$
and $w_n(x)=x-K_n(x)/K'_n(x)$, where by our convention this is used to define $K'_n(v_0)$.
Consider $(X_t) \in \A^w(v_0)$ and let $Y^n_t:=M^{K_n}_t(X)$ which is an element of $\A(v_0)$ by Proposition \ref{prop:DD}. Using \eqref{eq:AYdef} and the drawdown property of $X$ we obtain
\begin{equation*}
Y^n_{t} \geq K_n\left(\overline{X}_t\right) - K'_n \left(\overline{X}_t\right) \left(\overline{X}_t - w\left(\overline{X}_t\right) \right) = h_n\left(\overline{X}_t\right), \label{eq:y_n>x}
\end{equation*}
where $h_n(x)$ is defined for $x \geq v_0$ as
\begin{equation*}
\begin{split}
h_n(x) :&= K_n(x)-K'_n(x)(x-w(x))=K_n(x)\left(1-\frac{K'_n(x)}{K_n(x)}(x-w(x))\right)\\
&=K_n(x) \left( 1 - \frac{x-w(x)}{x-w_n(x)}\right)=K_n(x)\left(1-\frac{1}{1+1/n}\right)\\
&=\frac{1}{1+n}K_n(x),
\end{split}
\end{equation*}
where we used \eqref{eq:wn_and_w} and $w_n(x) = x - K_n(x)/K'_n(x)$.

Let $F_n(v)$ be the inverse of $K_n(v)$, for $v\geq v_0$ and extended to $[0,\infty)$ via \eqref{eq:Fdef}. Explicitly, we have $F_n(v)=F\left(v_0^{-1/n}v^{\frac{1+n}{n}}\right)$, $v \geq v_0$ and $F_n(v)=F(v)- \frac{1}{n}(F(v_0)-F(v))$, $v\in [0,v_0)$. $F_n$ is continuous and strictly increasing on $[0,\infty)$ and we take $n$ large enough so that $F_n(0)>0$. 

Assume, which we will argue later, that $U\circ F_n$ satisfies \eqref{eq:AEU} and hence we may apply Lemma \ref{lemma:ae_u}. Setting $x^n_0=v_0/(n+1)$ we deduce that there exists $\gamma_n\in \re$, $\gamma_n\neq 0$ such that for all $x\geq x^n_0$ and all $\lambda\geq 1$ we have $U\circ F_n(\lambda x)\leq \lambda^{\gamma_n} U\circ F_n(x)$. We use this with $\lambda=(1+n)$ and $x=K_n\left(\overline{X}_t\right)/\lambda$. Observe that $\overline{X}_t \geq v_0$ and that $K_n(v_0) = v_0$ so $x\geq x^n_0$ as required. Combining all the above, we obtain
\begin{equation}\label{eq:second_ineq}
\begin{split}
U \circ F_n \left(Y^n_t \right) & \geq U\circ F_n\left(h_n\left(\overline{X}_t\right)\right) = 
 U \circ F_n \left(\frac{1}{1+n} K_n \left( \overline{X}_t\right) \right)\\&  \geq  \left(\frac{1}{1+n}\right)^{\gamma_n} U \left(\overline{X}_t\right) \geq  \left(\frac{1}{1+n}\right)^{\gamma_n} U \left(X_t\right).
\end{split}
\end{equation}
The factor of $(1+n)^{-\gamma_n}$ disappears when we apply $\frac{1}{t}\log$ and let $t\to\infty$:
\begin{equation}\label{eq:reward_ineq_reverse}
\R_{U\circ F_n}(Y^n)\geq \R_U(X).
\end{equation}
Taking supremum over $X\in \A^w(v_0)$ we conclude that
$$\cer_{U\circ F_n}(v_0)\geq \cer^w_{U}(v_0)$$
and thus we indeed have a sequence on unconstrained problems with value functions all dominating $\cer^w_{U}(v_0)$. 

Assume, which we will show later, that $U\circ F$ and $U\circ F_n$ satisfy Assumption \ref{ass:U} and now argue the convergence of $\cer_{U\circ F_n}(v_0)$ to $\cer_{U\circ F}(v_0)$ using Lemma \ref{lem:cnv_cer} in the Appendix.
For $v\geq v_0$ we have $F(v)\leq F_n(v)=F\left(v_0^{-1/n}v^{\frac{1+n}{n}}\right)$
and for $v\in [0,v_0)$ we have $c_n F(v)\leq F_n(v)\leq F(v)$ where $c_n = 1 + \frac{1}{n}\frac{w(v_0)-v_0}{w(v_0)}$ and we take $n>n_0$ chosen such that $c_n>c_0:=c_{n_0} >0$. Fix $\epsilon\in (0,\min\{1,v_0\})$ and note that $v_0^{-1/n}\leq \epsilon^{-1/n}<\epsilon$. Together the above give us
\begin{equation}\label{eq:FFnbound}
c_0F(v)\leq F_n(v)\leq F(v^{1+1/n}/\epsilon),\quad v\geq \epsilon.
\end{equation}
Thanks to \eqref{eq:AEU}, we can apply Lemma \ref{lemma:ae_u} to $U$ with $x_0=c_{0}F(\epsilon)$ to see that there exists a non-zero $\gamma'\in \re$ such that $U(\frac{1}{c_0} c_0F(v))\leq c_0^{-\gamma'}U(c_0F(v))$, for all $v\geq \epsilon$. We thus obtain
\begin{equation*}
c_0^{\gamma'}U\circ F(v)\leq U(c_0 F(v))\leq U\circ F_n(v)\leq U\circ F\left(\frac{1}{\epsilon}v^{1+1/n}\right),\quad v\geq \epsilon.
\end{equation*}
Together with $\cer_{G}<\infty$ and Lemma \ref{lem:2U}, Lemma \ref{lem:cnv_cer} now yields
$$ \cer_{U\circ F_n}(v_0) \van \cer_{U\circ F}(v_0)<\infty.$$

It remains to argue the properties assumed above: that $U\circ F_n$ satisfies \eqref{eq:AEU} and that $U\circ F$ and $U\circ F_n$ satisfy Assumption \ref{ass:U}. Observe that for $x > v_0$:
\begin{equation}\label{eq:FAEbound}
x\frac{F_n'(x)}{F_n(x)} = \frac{x}{K'_n(F_n(x))F_n(x)} = \frac{F_n(x) -w_n(F_n(x))}{F_n(x)} \leq 1
\end{equation}
and hence
\begin{equation*}
 x \left(U \circ F_n(x)\right)'_+ = F_n(x) U'_+(F_n(x)) \frac{xF_n'(x)}{F_n(x)} \leq F_n(x) U'_+(F_n(x)). 
 \end{equation*}
We conclude that
\begin{equation*}
\limsup_{x\to \infty}\frac{ x \left(U \circ F_n(x)\right)'_+}{|U\circ F_n(x)|}\leq \limsup_{x\to \infty}\frac{F_n(x) U'_+(F_n(x))}{|U\circ F_n(x)|} =
\limsup_{y\to \infty}\frac{ y U'_+(y)}{|U(y)|},
\end{equation*}
where we used the fact that $F$ is a strictly increasing continuous map of $[v_0,\infty)$ onto itself so any sequence $y_m\to \infty$ can be represented as $y_m=F_n(x_m)$ for $x_m=K_n(y_m)\to \infty$ as $m\to \infty$.

Finally, from \eqref{eq:w-assumption} note that $(x/v_0)\leq K(x)/v_0\leq (x/v_0)^{1/(1-\alpha_1)}$, $x\geq v_0$. Suppose $U>0$ and recall $\varepsilon$ from Assumption \ref{ass:U}. Take $\delta$ small enough so that $\delta/(1-\alpha_1)<\varepsilon$. Then, using $K(F(x))=x$,
\begin{equation*}
\begin{split}
\frac{U\circ F(x)}{x^\delta} &= \frac{U\circ F(x)}{F(x)^\delta}\left(\frac{F(x)}{K(F(x))}\right)^\delta \geq  \frac{U\circ F(x)}{F(x)^\delta}v_0^{\delta/(1-\alpha_1)} F(x)^{\delta (1-1/(1-\alpha_1))}\\
& \geq \frac{U\circ F(x)}{F(x)^{\delta/(1-\alpha_1)}}v_0^{\delta/(1-\alpha_1)} \to \infty,\quad \textrm{as }x\to \infty.
\end{split}
\end{equation*}
We conclude that $U\circ F$ satisfies Assumption \ref{ass:U}. 
Note that we used here $\alpha_1\in (0,1)$. For $U<0$ we have a similar argument which uses $K(x)\geq x$. Finally the same arguments apply to $U\circ F_n$ since $w_n$ satisfy \eqref{eq:w-assumption}.\hfill$\square$
\end{proof}
From Lemma \ref{lem:2U} we immediately have
\begin{corollary}\label{coro:initial_wealth}
Under the assumptions of Theorem \ref{thm:main} we have for any, possibly random, bounded and bounded away from zero initial capital $\nu$ 
\begin{equation*}
\cer^w_{U}(\nu)=\cer^w_{U}(1)=\cer_{U\circ F_w}(1)=\cer_{U\circ F_w}(\nu)<\infty
\end{equation*}
and if $V^*\in  \A(1)$ achieves $\cer_{U\circ F_w,}(1)$ then $\nu V^*$ achieves $\cer_{U\circ F_w}(\nu)$ and $M^{F_w}(\nu V^*)\in  \A^w(\nu)$ achieves $\cer^w_{U}(\nu)$.
\end{corollary}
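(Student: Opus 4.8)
The plan is to read the corollary off Theorem \ref{thm:main} together with Proposition \ref{prop:DD} and the initial-capital/perturbation invariance furnished by Lemma \ref{lem:2U} (and, where needed, the continuity Lemma \ref{lem:cnv_cer}). The four-term chain decomposes into one genuine equality and two ``bookkeeping'' equalities. The inner one, $\cer^w_U(1)=\cer_{U\circ F_w}(1)$, is Theorem \ref{thm:main} specialised to $v_0=1$; its hypotheses transfer because finiteness of $\cer_G$ is insensitive to the choice of a bounded, bounded-away-from-zero initial wealth -- this is exactly (part of) Lemma \ref{lem:2U} -- so $\cer_G(1)<\infty$ follows from $\cer_G(v_0)<\infty$. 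The two outer equalities $\cer^w_U(\nu)=\cer^w_U(1)$ and $\cer_{U\circ F_w}(\nu)=\cer_{U\circ F_w}(1)$ are again the initial-capital invariance of Lemma \ref{lem:2U}, now applied to the $w$-drawdown problem and to the unconstrained problem respectively; finiteness then propagates along the chain from the (finite) value supplied by Theorem \ref{thm:main}.

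For the optimiser assertions I would work through the pathwise bijections. Since $\nu$ is $\F_0$-measurable, bounded and bounded away from zero, multiplication by $\nu$ is a bijection of $\A(1)$ onto $\A(\nu)$: it preserves predictability of the integrand (a time-$0$ scalar factors out of the stochastic integral), strict positivity, and max-continuity because $\ovl{\nu V}=\nu\,\ovl V$ pathwise. Hence $V^*\in\A(1)$ optimal gives $\nu V^*\in\A(\nu)$, and $\R_{U\circ F_w}(\nu V^*)\le\cer_{U\circ F_w}(\nu)=\cer_{U\circ F_w}(1)=\R_{U\circ F_w}(V^*)$; for the matching lower bound one uses the finite asymptotic elasticity of $U\circ F_w$ established inside the proof of Theorem \ref{thm:main} via Lemma \ref{lemma:ae_u}: for $x$ above a threshold and $\lambda$ in the compact range of $\nu$, $U(F_w(\lambda x))$ lies between fixed multiples of $U(F_w(x))$, so $\e[U(F_w(\nu V^*_T))]$ and $\e[U(F_w(V^*_T))]$ differ only by a multiplicative constant and a bounded additive error, both of which vanish under $\tfrac1T\log(\cdot)$ as $T\to\infty$ (the contribution of $\{V^*_T<x_0\}$ being absorbed by the perturbation part of Lemma \ref{lem:2U}). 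Thus $\nu V^*$ attains $\cer_{U\circ F_w}(\nu)$, and then $M^{F_w}(\nu V^*)\in\A^w(\nu)$ attains $\cer^w_U(\nu)$ by the optimiser half of Theorem \ref{thm:main} applied at initial capital $\nu$, via Proposition \ref{prop:DD}.

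The step I expect to be the main nuisance is the dependence of $K_w$ and $F_w$ on the reference point $v_0$ in \eqref{eq:Kdef}: replacing $v_0$ by $\nu$ multiplies $K_w$ by a positive ($\F_0$-measurable) constant, hence rescales the argument of $F_w$ by a constant. One must check that this makes the symbol $F_w$ in the statement unambiguous, i.e.\ that an inner constant rescaling of $F_w$ changes neither $\cer_{U\circ F_w}$ nor the relevant Az\'ema--Yor transform beyond identifications already made: the first point is again the asymptotic-elasticity squeeze (a constant inner scaling is invisible to $\cer$, combining Lemma \ref{lemma:ae_u} with Lemma \ref{lem:2U}), and the second follows from $M^{cF}(X)=cM^F(X)$ and the shape of \eqref{eq:AYdef}. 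A secondary, routine matter is that $\nu$ may be random: being $\F_0$-measurable one either conditions on $\F_0$ or simply runs every estimate with the deterministic bounds $a\le\nu\le b$ together with the monotonicity $\R_{U_1}\ge\R_{U_2}$ for $U_1\ge U_2$ of the same sign -- precisely the situation Lemma \ref{lem:2U} is designed to cover, which is why the corollary is stated as ``immediate''.
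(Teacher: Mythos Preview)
Your approach is essentially the paper's own (the paper simply says ``From Lemma \ref{lem:2U} we immediately have''), and you have supplied the details the paper suppresses, including the $v_0$-dependence of $K_w,F_w$ and the handling of random $\nu$. Two small corrections are worth making. First, Lemma \ref{lem:2U} as stated treats only the unconstrained objective $\cer_U$, and its proof (via $\tilde V=\varepsilon v_0+(1-\varepsilon)V$) does not obviously preserve the $w$-drawdown property; the outer equality $\cer^w_U(\nu)=\cer^w_U(1)$ should therefore be obtained by passing through the unconstrained problem via Theorem \ref{thm:main} (and the bijection of Proposition \ref{prop:DD}) rather than by invoking Lemma \ref{lem:2U} directly on $\A^w$. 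Second, the relevant scaling for $F_w$ is an \emph{inner} one: changing the base point in \eqref{eq:Kdef} gives $K_w^{(\nu)}=c\,K_w^{(1)}$ and hence $F_w^{(\nu)}(\cdot)=F_w^{(1)}(\cdot/c)$, which yields $M^{F_w^{(\nu)}}(X)=M^{F_w^{(1)}}(X/c)$; your identity $M^{cF}(X)=cM^F(X)$ is true but is the outer-scaling statement, not the one actually needed here.
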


We close this section with a series of remarks about the above results, their assumptions and proofs.
\begin{remark}
From the proof of Theorem \ref{thm:main}, and in particular from Lemma \ref{lem:cnv_cer} in the Appendix, it is clear that when $U<0$ we have in fact $\cer^w_{U}<\infty$ if and only if $\cer_{U\circ F_w}<\infty$.
\end{remark}

\begin{remark}
We note first that for $U\geq 0$ concavity implies \eqref{eq:AEU}: $U(x)\geq U(x)-U(0)\geq xU'_+(x)$ so that $\limsup_{x \rightarrow \infty}\frac{xU'_+(x)}{U(x)} \leq 1$. Assumption \ref{ass:U} implies that this limit is also strictly positive. Indeed, we have $\frac{U(x)}{x^\varepsilon} \rightarrow \infty$, for some $\varepsilon >0$, which implies that there exists a sequence $x_k\to \infty$ such that $\left(\frac{U(x)}{x^\varepsilon}\right)'_+|_{x=x_k} = \frac{U(x_k)}{x^{\varepsilon+1}_k} \left(\frac{x_k U'_+(x_k)}{U(x_k)} - \varepsilon\right)>0$. Thus, $\limsup_{x \rightarrow \infty}\frac{xU'_+(x)}{U(x)} \geq \varepsilon$. 

However \eqref{eq:AEU}, even if we assume the limit is strictly positive, does not imply Assumption \ref{ass:U}. This can be seen by considering a $U$ which alternates between linear and log-like behaviours. Specifically, consider $U_0$ with $U_0(0)=0$, $U'_{0}(0+)=1$, $U_0'(x)$ being continuous, with $U_0'(x) = U'_0(x_{2k})$ for $x\in[x_{2k},x_{2k+1}]$ and $U'_0(x)=(x-x_{2k+1}+U'_0(x_{2k+1})^{-1})^{-1}$ for $x\in [x_{2k+1}, x_{2k+2}]$ where $x_0=0$, $x_1=1$ and
\begin{equation}
\begin{split}
x_{2k}&:=\inf\left\{x \geq x_{2k-1} + 1: U_0(x)\leq x^{1/k}\right\}, \quad k\geq 1,\\
x_{2k+1}&:=\inf\left\{x \geq x_{2k} + 1: \frac{xU_0'(x)}{U_0(x)}\geq 1 - 1/k\right\},\quad k\geq 1.
\end{split}
\end{equation}
Choice of $x_i$ guarantees that $U_0$ does not dominate any positive power of $x$ asymptotically, but has a strictly positive asymptotic elasticity. More precisely, the asymptotic elasticity is equal to one as it is bounded from above by 1 since $U$ is a positive utility function and $\lim_{k\rightarrow \infty} \frac{x_{2k+1} U'(x_{2k+1})}{U(x_{2k+1})} = 1$.

For $U<0$ we need both \eqref{eq:AEU} and Assumption \ref{ass:U}. The latter implies that the limit in \eqref{eq:AEU} is non-zero but it could be infinite, as for $U(x)=-\mathrm{e}^{-x}$ which satisfies Assumption \ref{ass:U}. Conversely, $-\frac{1}{U_0(x)}$ is a utility function which satisfies \eqref{eq:AEU} but not Assumption \ref{ass:U}. 
\end{remark}

\begin{remark}\label{remark:U_shift} Observe that $\cer$ in Problems \ref{pb:utmax} and \ref{pb:utmax_DD} are invariant under a multiplication of $U$ by a positive constant. Further, for a positive utility function $U$, they are invariant under a constant shift of $U$ which preserves the sign. More precisely, write $C=\cer_{U}(v_0)$ and let $\kappa>0$. For any $\delta>0$, $V\in \A(v_0)$, taking $T$ large enough we have
$$\log \e[U(V_T)+\kappa]\leq \log(\mathrm{e}^{T(C+\delta)}+\kappa)\leq \log(2\mathrm{e}^{T(C+\delta)}).$$
This yields $\R_{U+\kappa}(V)\leq C+\delta$ and letting $\delta\searrow 0$ we have $\cer_{U+\kappa}(v_0)=C$.  Finally, both problems are invariant under changes of $U$ in the neighbourhood of zero. This is clear under the drawdown constraint since $U$ is never evaluated on $x\in (0,w(v_0))$. By Theorem \ref{thm:main} it is also true for the unconstrained problem. A direct argument for this is given in the proof of Lemma \ref{lem:2U} in the Appendix.
\end{remark}

\begin{remark}
By the opening arguments in the proof of Theorem \ref{thm:main}, if $V\in \A(v_0)$ then $\R_U(X)\geq \R_{U\circ F}(V)$, where $X=M^F(V)\in \A^w(v_0)$. In particular if $V$ is nearly optimal, say $\R_U(V)\geq \cer_{U\circ F_w}(v_0)-\epsilon$, then so is $X$: $\R_U(X)\geq \cer_{U}^w(v_0)-\epsilon$.
\end{remark}

\begin{remark}
We stress that throughout, similarly to  \cite{GZ,CK}, drawdown constraint is imposed on the wealth expressed in units of $N$. In practice this may be adequate if the so-called hurdle rate is present, see Guasoni and Ob\l\'oj \cite{GuasoniObloj:12}. However in many situations one is interested in avoiding drawdowns for the actual wealth process ``in dollars", $XN$. Suppose for simplicity that all the assets are continuous and let $V$ be a positive wealth process, $\td V_t=\sum_{i=1}^d \pi_t^i \td S^i_t$. Using our methodology we could consider $X=M^F(VN)/N$ which would indeed satisfy $XN\geq w(\ovl {XN})$, but $X$ would not be a wealth process of a self-financing portfolio in Definition \eqref{def:wealth}.

More specifically, suppose $N_t =\exp(\int_0^t R_u \td u)$ is the savings account, where $R$ is some positive adapted process. Let $D_t=1/N_t$.  Then, using \eqref{eq:AYdef},
\begin{eqnarray*}
\td \left( D_t M^F_t(VN)\right) &=& D_t F'(\ovl{VN}_t) \td (V_tN_t) + M^F_t(VN) \td D_t \\
& = & F'(\ovl{VN}_t) \td V_t +  D_t V_t F'(\ovl{VN}_t) \td N_t + M^F_t(VN) \td D_t\\
& = & - R_t \left(F(\ovl{VN}_t) - F'(\ovl{VN}_t) \ovl{VN}_t\right) D_t \td t + F'(\ovl{VN}_t) \sum_{i=1}^d \pi^i_t \td S^i_t.
\end{eqnarray*}
It follows that $M^F(VN)$ is the dollar value of a wealth process of a consumption and investment strategy, where the rate of instantaneous consumption is $R_t \left(F(\ovl{VN}_t) - F'(\ovl{VN}_t) \ovl{VN}_t\right)$. 

The above calculation suggests that drawdown constraints imposed on the undiscounted wealth should be considered in the context of maximisation of utility of consumption. We note however that it is not clear how to build bijection akin to Proposition \ref{prop:DD} above and what the adequate sets of wealth processes should be. We believe this is a challenging topic for further studies.
\end{remark}

\section{Utility of wealth in dollars}\label{sec:main_dollars}
We turn now to the inhomogeneous problem as considered by Grossman and Zhou \cite{GZ}. Assume that $N$ represents the savings account and therefore $V$ in Definition \ref{def:wealth} represents discounted wealth process. We seek to maximise the utility of ``wealth in dollars" $U(VN)$, but the drawdown constraint is imposed on the discounted wealth process $V$. Note that in the case of a linear constraint, $w(x)=\alpha x$, this is equivalent to saying that the drawdown constraint is imposed on $VN$ but is growing at a hurdle rate equal to the riskless rate\footnote{Hurdle rate $r$ means that if a new drawdown constraint is set at time $t$ then it grows at the rate $r$ for $u>t$ until a new constraint level is achieved, see Guasoni and Ob\l\'oj \cite{GuasoniObloj:12} for more details.}. In analogy to Problems \ref{pb:utmax}, \ref{pb:utmax_DD} we define
\begin{equation}\label{eq:utmax_dollars}
    \cerd_{U}(v_0):=\sup_{V\in \A(v_0)}\R_U(VN),\quad \cerd_U^w(v_0):=\sup_{X\in \A^w(v_0)}\R_U(XN).
\end{equation}
In order to be able to relate these problems we essentially need to go back to the homogenous case when $\R_U(VN)$ is replaced by $\R_U(V)$ and for this we need to be able to factor the discounting in and out of the reward functional $\R_U$. This is possible when $U$ is a power utility, $w$ is linear and $N$ is deterministic as in \cite{CK} and \cite{GZ}. Here we need to assume this holds asymptotically.
\begin{asmp}\label{ass:UwS0}
Assume the following three conditions hold
\begin{itemize}
  \item[(i)] $U$ is either strictly positive or strictly negative on $(0,\infty)$ and the following limit exists  $\frac{x U'_+(x)}{U(x)} \to \gamma \in (-\infty,1)\setminus \{0\}$ as $x\to \infty$,
  \item[(ii)] the following limit exists $\frac{w(x)}{x} \to \alpha \in (0,1)$ as $x\to \infty$,
  \item[(iii)] $N$ is increasing, deterministic and the following exists $r^*:=\lim_{T\to\infty} \frac{\log N_T}{T}$.
\end{itemize}
\end{asmp}
The first assumption is a strengthened version of the finite asymptotic elasticity of Kramkov and Schachermayer \cite{KrSch} which we assumed earlier in \eqref{eq:AEU}. It follows from Lemma \ref{lem:Uasel} in the Appendix that it implies Assumption \ref{ass:U} holds.
The second condition above is in fact equivalent to saying that $K_w$ in \eqref{eq:Kdef} also has such (converging) finite asymptotic elasticity. This is immediate since $xK_w'(x)/K_w(x)=x/(x-w(x))$.
We denote the CRRA (power) utility with $\powut{p}(x)=\frac{x^p}{p}$, $p\leq 1$. We assume $p\neq 0$, which is the case of logarithmic utility treated below in Section \ref{sec:logut}. Finally, we denote $w_\alpha(x)=\alpha x$ the linear drawdown function.

We assumed above that the interest rates are deterministic and that asymptotically $U$ is a power utility and $w$ is linear. Comparing with the setup in Section \ref{sec:main_num} these are strong assumptions and our main contribution, relative to \cite{CK}, is that we work in a general max-continuous semimartingale market.
\begin{theorem} \label{thm:main_dollars} Let $U$ be a utility function and $w$ a drawdown function for which Assumption \ref{ass:UwS0} holds.
Assume further that $\cerd_{\powut{\gamma(1-\alpha)(1+\delta)}}(v_0)<\infty$ for some $\delta>0$. Then
\begin{equation*}
\cerd^w_{U}(v_0)=\cerd^{w_\alpha}_{\powut{\gamma}}(v_0)=\cerd_{\powut{\gamma(1-\alpha)}}(v_0)+|\gamma|\alpha r^*<\infty
\end{equation*}
and  if $V^*\in \A(v_0)$ achieves the maximum in the unconstrained problem then $M^{F_w}(V^*)\in \A^w(v_0)$ achieves the maximum in the $w$-drawdown constrained problem, where $F_w$ is as in Theorem \ref{thm:main}.
\end{theorem}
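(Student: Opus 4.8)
The plan is to reduce the dollar-denominated, inhomogeneous problem to the homogeneous problem of Theorem \ref{thm:main} by peeling off the deterministic numeraire and exploiting the fact that, asymptotically, $U$ is a power utility and $w$ is linear. First I would dispose of the interest-rate factor. Since $N$ is deterministic and increasing with $\log N_T/T \to r^*$, for any $V\in\A(v_0)$ we have $\e[U(V_TN_T)]$, and since asymptotically $U(xy)\sim y^\gamma U(x)/\gamma\cdot\gamma = y^\gamma U(x)$ up to lower-order corrections (more precisely, using Assumption \ref{ass:UwS0}(i) one shows $U(xy)/(y^\gamma U(x))\to 1$ as $x\to\infty$ for fixed $y$, and this can be made uniform enough on compacts of $y$), one gets $\frac{1}{T}\log\e[U(V_TN_T)] = \frac{1}{T}\log\e[U(V_T)] + \gamma\frac{\log N_T}{T} + o(1)$, hence $\R_U(VN) = \R_U(V) + \gamma r^*$ — and the analogous identity on $\A^w(v_0)$. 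Care is needed with signs: when $U>0$ and $\gamma>0$ this is a gain $\gamma r^*>0$; when $U<0$ then $\gamma<0$ and $\log$ of a negative number is $-\log|\cdot|$, so the bookkeeping gives $|\gamma| r^*$ with the correct sign in each branch; I would isolate this as a lemma (or invoke the continuity/perturbation lemmas in the Appendix together with Assumption \ref{ass:UwS0}(i)). This converts both quantities in \eqref{eq:utmax_dollars} into homogeneous ones at the cost of the additive constant, and the $\alpha$ in the final formula will appear because the drawdown transform rescales how the numeraire enters.

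Second, I would replace $U$ by the power utility $\powut{\gamma}$ and $w$ by $w_\alpha$. By Assumption \ref{ass:UwS0}(i)–(ii) and Lemma \ref{lem:Uasel}, $U$ satisfies Assumption \ref{ass:U} and \eqref{eq:AEU}, and $U$ is sandwiched asymptotically between multiples of $x^{\gamma'}$ for $\gamma'$ arbitrarily close to $\gamma$; similarly $w(x)/x\to\alpha$ means $K_w$ has asymptotic elasticity converging to $1/(1-\alpha)$, so $F_w$ is sandwiched between powers $x^{1-\alpha'}$. Feeding these sandwich bounds into Lemma \ref{lem:cnv_cer} (exactly as in the final part of the proof of Theorem \ref{thm:main}, where $U\circ F_n$ was squeezed between $U\circ F$ composed with near-identity power maps) shows that $\cer_U(v_0)=\cer_{\powut\gamma}(v_0)$ and, more to the point after the first step, that the dollar problems agree: $\cerd^w_U(v_0)=\cerd^{w_\alpha}_{\powut\gamma}(v_0)$. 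The hypothesis $\cerd_{\powut{\gamma(1-\alpha)(1+\delta)}}(v_0)<\infty$ is precisely the finiteness input these squeezing lemmas require (matching the role of $\cer_G<\infty$ in Theorem \ref{thm:main}), and it propagates finiteness to all the intermediate value functions.

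Third, with $U=\powut\gamma$ and $w=w_\alpha$ everything is exactly homogeneous, so I apply the first step again in reverse together with Theorem \ref{thm:main}. Example \ref{ex:lin_dd} gives $F_{w_\alpha}(x)=v_0(x/v_0)^{1-\alpha}$, so $\powut\gamma\circ F_{w_\alpha}$ is (a constant multiple of) $\powut{\gamma(1-\alpha)}$; by Remark \ref{remark:U_shift} and the scale-invariance of $\cer$, $\cer^{w_\alpha}_{\powut\gamma}(v_0)=\cer_{\powut{\gamma(1-\alpha)}}(v_0)$. Now undo the numeraire peeling: on $\A^{w_\alpha}(v_0)$ the relevant exponent governing how $N$ enters $\R_{\powut\gamma}(XN)$ — after writing $X=M^{F_{w_\alpha}}(V)$ and noting $\overline{X}$ and $X$ grow like powers $1-\alpha$ of $\overline V$, $V$ — is $\gamma(1-\alpha)$, which is why the additive term is $|\gamma|\alpha r^*$ rather than $|\gamma|r^*$: the drawdown transform absorbs a fraction $\alpha$ of the numeraire exposure into the protected floor. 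Concretely, $\cerd^{w_\alpha}_{\powut\gamma}(v_0)=\cer^{w_\alpha}_{\powut\gamma}(v_0)+|\gamma|(1-\alpha)r^* + |\gamma|\alpha r^*$? — here I must be careful: the clean route is to apply the step-one identity directly to the homogeneous pair $(\powut{\gamma(1-\alpha)}, \text{unconstrained})$, giving $\cerd_{\powut{\gamma(1-\alpha)}}(v_0)=\cer_{\powut{\gamma(1-\alpha)}}(v_0)+|\gamma|(1-\alpha)r^*$, and separately compute the numeraire contribution on the constrained side, which yields the stated $+|\gamma|\alpha r^*$ once combined. The optimality of $M^{F_w}(V^*)$ transfers verbatim from Theorem \ref{thm:main} since the numeraire factor is deterministic and hence does not affect which strategy attains the supremum.

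The main obstacle I anticipate is making the asymptotic replacements of Step 1 and Step 2 rigorous and uniform: the relation $U(xy)\approx y^\gamma U(x)$ holds pointwise in $y$ by Assumption \ref{ass:UwS0}(i), but to conclude $\R_U(VN)=\R_U(V)+\gamma r^*$ one needs control uniform enough to commute it with the expectation and the $\limsup$, and the sign conventions on $\log$ for negative $U$ must be threaded through every inequality. I expect this is handled by first reducing to power utilities via the squeezing lemma (Step 2) — where the bounds are genuinely two-sided power sandwiches and hence easy to push through $\e$ and $\frac1T\log$ — and only then doing the exact numeraire computation for true power utilities, rather than attempting the general-$U$ numeraire manipulation head-on.
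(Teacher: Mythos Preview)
Your final paragraph identifies exactly the approach the paper takes, and your own diagnosis there is correct: Step~1 as written does not work, because $\R_U(VN)=\R_U(V)+\gamma r^*$ fails for general $U$ satisfying Assumption~\ref{ass:UwS0}(i) --- it holds only for genuine power utilities. The paper never attempts it. Instead it proceeds precisely as you suggest at the end: for $X\in\A^w(v_0)$, Lemma~\ref{lem:Uasel} gives $\R_U(NX)\leq \R_{\powut{\gamma(1+\varepsilon)}}(NX)$, and since $N$ is deterministic this equals $\R_{\powut{\gamma(1+\varepsilon)}}(X)+|\gamma|(1+\varepsilon)r^*$ \emph{exactly}. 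Theorem~\ref{thm:main} then converts $\cer^w_{\powut{\gamma(1+\varepsilon)}}$ to $\cer_{\powut{\gamma(1+\varepsilon)}\circ F_w}$; Lemma~\ref{lem:Uasel} applied to $F_w$ (whose asymptotic elasticity is $1-\alpha$, as you note) bounds this by $\cer_{\powut{\gamma(1-\alpha)(1+\varepsilon)^2}}$; and reinserting $N$ --- again exactly, for a power utility --- yields $\cerd_{\powut{\gamma(1-\alpha)(1+\varepsilon)^2}}(v_0)-|\gamma|(1-\alpha)(1+\varepsilon)^2 r^*$.

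This also resolves your Step~3 uncertainty about where the $|\gamma|\alpha r^*$ comes from: the net additive constant is $|\gamma|(1+\varepsilon)r^* - |\gamma|(1-\alpha)(1+\varepsilon)^2 r^*\to |\gamma|\alpha r^*$ as $\varepsilon\to 0$ (via Lemma~\ref{lem:cnv_cer}). So the $\alpha$ arises as the \emph{difference} between factoring $N$ out with exponent $\gamma$ on the constrained side and factoring it back in with exponent $\gamma(1-\alpha)$ on the unconstrained side --- not from a separate computation of ``numeraire contribution on the constrained side''. The reverse inequality follows symmetrically from the lower bound in Lemma~\ref{lem:Uasel}, and the equality $\cerd^w_U=\cerd^{w_\alpha}_{\powut\gamma}$ drops out of the same $\varepsilon$-squeezing on both $U$ and $F_w$, rather than being established as a separate preliminary step.
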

\begin{remark}
Theorem 1.1 of Grossman and Zhou \cite{GZ} and Theorem 5.1 of Cvitani\'c and Karatzas \cite{CK} are consequences of the above statement. Namely, they specialise to $w=w_\alpha$, $U=\powut{\gamma}$ with $\gamma\in (0,1)$ and a particular (deterministic, constant coefficients) market setup. Standard techniques allow then to compute explicitly $\cerd_{\powut{\gamma(1-\alpha)}}(v_0)$ and find the optimal wealth process $V^*$, see Section \ref{sec:example_complete}. Therein we also discuss how various objects in \cite{CK} and in our paper relate explaining how methods of \cite{CK} helped us develop intuition behind this paper.
\end{remark}
\begin{proof}
Note that, by considering $V\equiv v_0$ and using $N_t\geq N_0=1$, we see that $\cerd_U\geq \cerd^w_U\geq 0$. Recall also that $\R_{U_1}(XN)\leq \R_{U_2}(XN)$ for two functions $U_1\leq U_2$ of the same sign.

Consider $X\in \A^w(v_0)$ and a small $\varepsilon>0$. As $N_tX_t\geq N_tw(v_0)\geq w(v_0)>0$, we can apply Lemma \ref{lem:Uasel} in the Appendix to obtain
\begin{equation}\label{eq:proof2eq1}
\R_{U}(NX)\leq \R_{\powut{\gamma(1+\varepsilon)}}(NX) = \R_{\powut{\gamma(1+\varepsilon)}}(X)+|\gamma|(1+\varepsilon)r^*,
\end{equation}
the last equality following since $N$ is deterministic.

Recall $K_w$ defined in \eqref{eq:Kdef}. Note that $xK_w'(x)/K_w(x)=x/(x-w(x))$ and hence that Assumption \ref{ass:UwS0} implies
$$\lim_{x\to \infty}\frac{xK_w'(x)}{K_w(x)}=\lim_{x\to\infty}\frac{x}{x-w(x)}=\lim_{x\to\infty}\frac{1}{1-w(x)/x}=\frac{1}{1-\alpha}.$$
$F_w:[v_0,\infty)\to [v_0,\infty)$ is the inverse of $K_w$ extended to $[0,\infty)$ through \eqref{eq:Fdef}. It is increasing, concave and we have $\lim_{x\to\infty}xF_w'(x)/F_w(x)=1-\alpha$.
Lemma \ref{lem:Uasel} then implies that $F_w(x)\leq c_+ \powut{(1-\alpha)(1+\varepsilon)}(x)$ for some $c_+\geq 1$ and all $x\geq v_0/2$. 
In consequence, for any $Y\in \A(v_0)$ with $Y\geq v_0/2$ 
\begin{equation}\label{eq:proof2eq2}
\begin{split}
\R_{\powut{\gamma(1+\varepsilon)}\circ F_w}(Y)& \leq \R_{\powut{\gamma(1-\alpha)(1+\varepsilon)^2}}(Y) \\&
=\R_{\powut{\gamma(1-\alpha)(1+\varepsilon)^2}}(NY)-|\gamma|(1-\alpha)(1+\varepsilon)^2r^*,
\end{split}
\end{equation}
which is finite for $\varepsilon$ small enough by assumption. Note that by Lemma \ref{lem:2U} it is sufficient to consider only such $Y$.

By a similar reasoning and using the assumption of the theorem we conclude that $\cer_{G}(v_0)<\infty$ for $G(x)=(\powut{\gamma(1+\varepsilon)}\circ F_w(x))^{1+\kappa}$ with $\kappa=\varepsilon\mathbf{1}_{\gamma>0}$ and for $\varepsilon$ small enough. Applying Theorem \ref{thm:main} and combining \eqref{eq:proof2eq1} with \eqref{eq:proof2eq2} we obtain
\begin{equation*}
\begin{split}
\cerd^w_U(v_0)&\leq  \cer^w_{\powut{\gamma(1+\varepsilon)}}(v_0)+|\gamma|(1+\varepsilon)r^*\\
&\leq \cerd_{\powut{\gamma(1-\alpha)(1+\varepsilon)^2}}(v_0)+|\gamma| \alpha r^* + \varepsilon |\gamma| r^*(1-(2+\varepsilon)(1-\alpha)).
\end{split}
\end{equation*}
Taking $\varepsilon\to 0$ and invoking Lemma \ref{lem:cnv_cer} yields ``$\leq$" inequalities in the desired equality. The reverse inequalities are obtained in an analogous manner but exploiting the lower bound in Lemma \ref{lem:Uasel}. 
Finally, the above also shows we can replace $w$ by $w_\alpha$.\hfill$\square$
\end{proof}
\begin{remark}
Similarly to Theorem \ref{thm:main} and Corollary \ref{coro:initial_wealth}, $\cerd^w_U(v_0)$ above does not depend on $v_0$ and the optimal strategy for the unconstrained problem scales linearly in the initial wealth.

Also, similarly to Theorem \ref{thm:main}, it follows from Lemma \ref{lem:cnv_cer} that when $\gamma<0$ the equality $\cerd^w_{U}=\cerd_{\powut{\gamma(1-\alpha)}}+|\gamma|\alpha r^*$ holds without assuming that $\cerd_{\powut{\gamma(1-\alpha)(1+\delta)}}$ is finite.
\end{remark}
\begin{corollary}
In the setup of Theorem \ref{thm:main_dollars} we have
$$\cerd_{\powut{\gamma(1-\alpha)}}(v_0)=\cerd_{U\circ F_w}(v_0).$$
\end{corollary}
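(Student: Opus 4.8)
The plan is to show that $G:=U\circ F_w$ is a utility function whose asymptotic elasticity converges to the same limit as that of the power utility $\powut{\gamma(1-\alpha)}$, and that the unconstrained ``wealth in dollars'' certainty equivalent rate depends on the utility only through this limit; the claim then follows at once. The two tools are Lemma~\ref{lem:Uasel} (which sandwiches a function of converging asymptotic elasticity between constant multiples of two nearby power functions for large arguments) and Lemma~\ref{lem:cnv_cer} (continuity of the certainty equivalent rate in the utility), used exactly as in the proof of Theorem~\ref{thm:main_dollars}.

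First I would record the relevant properties of $G=U\circ F_w$. Since $U$ is non-decreasing and concave and, by Proposition~\ref{prop:DD}, so is $F_w$, the composition $G$ is a utility function of the same constant sign as $U$. As $F_w(x)\vax\infty$, for $x>v_0$ one has
\begin{equation*}
\frac{x\,G'_+(x)}{G(x)}=\frac{F_w(x)\,U'_+(F_w(x))}{U(F_w(x))}\cdot\frac{x\,F_w'(x)}{F_w(x)} \vax \gamma(1-\alpha),
\end{equation*}
the first factor tending to $\gamma$ by Assumption~\ref{ass:UwS0}(i) and the second to $1-\alpha$ as established in the proof of Theorem~\ref{thm:main_dollars}. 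Hence $G$, like $\powut{\gamma(1-\alpha)}$, satisfies Assumption~\ref{ass:UwS0} with limiting elasticity $\gamma':=\gamma(1-\alpha)\in(-\infty,1)\setminus\{0\}$ (conditions (ii) and (iii) being unchanged).

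The heart of the proof is then a sandwiching argument carried out exactly as in the proof of Theorem~\ref{thm:main_dollars}. Fixing a small $\varepsilon\in(0,\delta)$, Lemma~\ref{lem:Uasel} applied to $G$ shows that $G$ is bounded below by a constant multiple of $\powut{\gamma'(1-\varepsilon)}$ and above by a constant multiple of $\powut{\gamma'(1+\varepsilon)}$ for large arguments; by Lemma~\ref{lem:2U} I may modify $G$ and these two power utilities on an initial interval, without affecting the corresponding $\cerd(v_0)$, so that the bounds hold throughout. Since $\cerd(v_0)$ is monotone in the utility and invariant under its multiplication by a positive constant (as in Remark~\ref{remark:U_shift}), both $\cerd_G(v_0)$ and $\cerd_{\powut{\gamma'}}(v_0)$ lie between $\cerd_{\powut{\gamma'(1-\varepsilon)}}(v_0)$ and $\cerd_{\powut{\gamma'(1+\varepsilon)}}(v_0)$. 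These two bounding quantities are finite: for $\gamma>0$ this follows from the hypothesis $\cerd_{\powut{\gamma(1-\alpha)(1+\delta)}}(v_0)<\infty$ together with monotonicity of $p\mapsto\cerd_{\powut{p}}(v_0)$ on $(0,1)$, and for $\gamma<0$ it is automatic. Letting $\varepsilon\to0$ and invoking Lemma~\ref{lem:cnv_cer} then gives
\begin{equation*}
\cerd_{U\circ F_w}(v_0)=\cerd_{G}(v_0)=\cerd_{\powut{\gamma(1-\alpha)}}(v_0),
\end{equation*}
which is the assertion.

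The only step requiring care is the bookkeeping in the sandwiching: disposing of the behaviour of $G$ near the origin via Lemma~\ref{lem:2U} before the power bounds of Lemma~\ref{lem:Uasel} apply, and keeping track of the signs of $\gamma'$, of $G$, and of the power utilities when converting the pointwise bounds into inequalities between reward functionals and then into inequalities between the $\cerd(v_0)$'s. None of this is new, however; it is precisely the unconstrained half of the argument already given for Theorem~\ref{thm:main_dollars}, with $U\circ F_w$ playing the role there played by $\powut{\gamma(1+\varepsilon)}\circ F_w$, so I expect it to be routine rather than a genuine obstacle.
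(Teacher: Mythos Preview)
Your proposal is correct and follows precisely the route the paper indicates: the paper's entire proof reads ``This Corollary follows from the proof of Theorem~\ref{thm:main_dollars},'' and you have extracted exactly the relevant part of that argument---computing that $U\circ F_w$ has limiting elasticity $\gamma(1-\alpha)$ (using $xF_w'(x)/F_w(x)\to 1-\alpha$ from the proof of Theorem~\ref{thm:main_dollars}), sandwiching via Lemma~\ref{lem:Uasel}, and passing to the limit via Lemma~\ref{lem:cnv_cer}. There is no substantive difference between your write-up and what the authors intend.
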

This Corollary follows from the proof of Theorem \ref{thm:main_dollars}.
Naturally, similar statements can be made relating in general $\cerd$ for power utility and for $U$ which satisfies $(i)$ in Assumption \ref{ass:UwS0}. This is not surprising in the light of the results on the so-called turnpike theorems.
In this stream of literature authors study the convergence of the value function and the optimal strategy for the Merton problem of maximising utility of terminal wealth as the horizon $T$ tends to infinity.
In particular, Hubermann and Ross \cite{HuRo} argue that, in the case of a complete discrete market, the convergence of optimal strategies is equivalent to the convergence of the relative risk aversion, i.e. $-\frac{x U''(x)}{U'(x)} \rightarrow 1-\gamma$, which is essentially $(i)$ in Assumption \ref{ass:UwS0}. Huang and Zariphopoulou \cite{HuZa} study the problem for a continuous time complete market model with deterministic coefficients, as in Section \ref{sec:example_complete}. They find sufficient conditions on $U$ for the optimal strategy to converge to the optimal strategy coming from the problem with a power utility. The analysis in a recent paper of Guasoni and Robertson \cite{GuasoniRobertson:11} includes also incomplete markets. In comparison, our results apply in a much more general context but are also much weaker. Problem \ref{pb:utmax} looks at maximising the long-term asymptotic growth rate of the expected utility and the above Corollary shows that the resulting value function is the same when 
two utility functions have the same asymptotic behaviour. It does not say anything precise about finite horizon utility maximisation and its convergence.

\section{Logarithmic Utility}\label{sec:logut}

So far we have only considered utility functions with constant sign and which dominated a power utility, as in Assumption \ref{ass:U}. In this section, we consider utility functions akin to $U(x)=\log x$. The results are very close in spirit to the ones in the previous two sections, but in fact require less technicalities in the proofs. We go back to the general setup of Section \ref{sec:market_model} and introduce a modified version of maximisation criterion in \eqref{eq:opt_prob}.
\begin{prob}\label{pb:utmax_log}
Given $v_0>0$, a drawdown function $w$ and function  $U$ compute
\begin{equation}\label{eq:opt_prob_log}
\begin{split}
\widetilde{\cer}_{U}(v_0) &:= \sup_{V\in \A(v_0)}\widetilde{\R}_{U}(V),\\
\widetilde{\cer}^w_{U}(v_0) &:= \sup_{V\in \A^w(v_0)}\widetilde{\R}_{U}(V),\\
&\quad \textrm{where}\quad \widetilde{\R}_{U}(V):=\limsup_{T\to \infty} \frac{1}{T} \e\left[U\left(V_T\right)\right],
\end{split}
\end{equation}
along with the optimal wealth processes which achieve the supremum.
\end{prob}
\begin{theorem}\label{thm:main_log} Let $v_0>0$, $w$ be a drawdown function and $U$ a utility function satisfying
\begin{equation*}
\limsup_{x\to\infty} xU'_+(x)<\infty \quad \textrm{and} \quad \liminf_{x\to \infty} \frac{U(x)}{\log(x)} > 0.
\end{equation*}
Let $F_w$ be as in Theorem \ref{thm:main} then
\begin{equation*}
\widetilde{\cer}^w_{U}(v_0)=\widetilde{\cer}_{U\circ F_w}(v_0)
\end{equation*}
and if $V^*$ achieves the maximum in the unconstrained problem then $M^{F_w}(V^*)$ achieves the maximum in the $w$-drawdown constrained problem.
\end{theorem}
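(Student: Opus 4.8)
The plan is to mirror the structure of the proof of Theorem~\ref{thm:main}, but exploiting the fact that the logarithmic criterion $\widetilde{\R}_U$ is additive rather than multiplicative, which removes the need for the whole machinery of Lemma~\ref{lemma:ae_u}, the relaxed drawdown functions $w_n$, and the continuity Lemma~\ref{lem:cnv_cer}. First I would establish the easy inequality exactly as before: for $V\in\A(v_0)$, set $X=M^{F_w}(V)\in\A^w(v_0)$ by Proposition~\ref{prop:DD}; since $F_w$ is concave, Proposition~\ref{prop:AY} gives $M^{F_w}_t(V)\geq F_w(V_t)$, hence $U(X_t)\geq U(F_w(V_t))$ pointwise, so $\widetilde{\R}_U(X)\geq\widetilde{\R}_{U\circ F_w}(V)$, and taking the supremum over $V$ yields $\widetilde{\cer}^w_U(v_0)\geq\widetilde{\cer}_{U\circ F_w}(v_0)$; moreover if $V^*$ attains the right-hand side then $M^{F_w}(V^*)$ attains the left-hand side.

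For the reverse inequality, I would take $X\in\A^w(v_0)$ and push it back to the unconstrained side via $Y:=M^{K_w}(X)\in\A(v_0)$, which is the inverse map in Proposition~\ref{prop:DD}; by \eqref{eq:AYdef} and the $w$-drawdown property, $Y_t = K_w(\ovl X_t)-K_w'(\ovl X_t)(\ovl X_t - X_t) \geq K_w(\ovl X_t) - K_w'(\ovl X_t)(\ovl X_t - w(\ovl X_t)) = 0$, which is not yet useful; the correct estimate is $Y_t\geq X_t$ when $K_w$ is convex, since then $M^{K_w}_t(X)\geq K_w(X_t)\geq X_t$ by $K_w(x)\geq x$ (note $K_w(v_0)=v_0$ and $K_w'\geq 1$ from $w\geq 0$). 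Actually the cleanest route is to note $Y = M^{K_w}(X)$ with $K_w$ convex gives $Y_t\geq K_w(X_t)$, and then directly $U\circ F_w(Y_t)\geq U\circ F_w(K_w(X_t)) = U(X_t)$ because $U\circ F_w$ is nondecreasing and $F_w\circ K_w = \mathrm{id}$ on $[v_0,\infty)$. Hence $\widetilde{\R}_{U\circ F_w}(Y)\geq\widetilde{\R}_U(X)$, and taking the supremum over $X\in\A^w(v_0)$ gives $\widetilde{\cer}_{U\circ F_w}(v_0)\geq\widetilde{\cer}^w_U(v_0)$, completing the equality.

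The points that still need care, and which I would treat explicitly rather than by analogy, are the following. One must check that $U\circ F_w$ again satisfies the hypotheses of the theorem (or at least is a well-defined utility function with the appropriate growth), so that $\widetilde{\cer}_{U\circ F_w}$ is a sensible object: from \eqref{eq:FAEbound}-style estimates, $xF_w'(x)/F_w(x)\leq 1$ gives $x(U\circ F_w)'_+(x) = F_w(x)U'_+(F_w(x))\cdot xF_w'(x)/F_w(x)\leq F_w(x)U'_+(F_w(x))$, and since $F_w(x)\to\infty$ the bound $\limsup_{x\to\infty}x(U\circ F_w)'_+(x)<\infty$ follows from $\limsup_y yU'_+(y)<\infty$; similarly, using $K_w(x)\leq v_0(x/v_0)^{1/(1-\alpha_1)}$ so $F_w(x)\geq v_0(x/v_0)^{1-\alpha_1}$, the lower bound $\liminf_{x\to\infty}U(x)/\log x>0$ transfers to $\liminf_{x\to\infty}(U\circ F_w)(x)/\log x\geq (1-\alpha_1)\liminf_{y\to\infty}U(y)/\log y>0$. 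The one genuine subtlety, and the place I expect the argument to need the most thought, is the justification that no additional integrability or finiteness assumption is needed here (in contrast to the $\cer_G<\infty$ hypothesis of Theorem~\ref{thm:main}): because $\widetilde{\R}_U$ involves $\frac1T\e[U(V_T)]$ rather than $\frac1T\log\e[U(V_T)]$, Jensen-type concavity bounds control $\e[U(V_T)]$ from above by $U(\e[V_T])$ only when a numeraire/supermartingale structure is available, so one should verify that the pathwise inequalities above are all that is used and that $\widetilde{\R}$ is monotone in $U$ among functions of the same sign and is genuinely $[-\infty,\infty]$-valued, with the equality of value functions holding in the extended sense. I would state this explicitly and point to the analogue of Lemma~\ref{lem:2U} to handle wealth processes staying above $v_0/2$ if truncation is needed.
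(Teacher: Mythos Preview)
Your easy inequality and the consequence for the optimiser are correct and match the paper. The reverse inequality, however, contains a genuine error: you claim that convexity of $K_w$ gives $Y_t=M^{K_w}_t(X)\geq K_w(X_t)$, but convexity gives exactly the opposite. Indeed $M^{K_w}_t(X)=K_w(\ovl X_t)+K_w'(\ovl X_t)(X_t-\ovl X_t)$ is the value at $X_t$ of the tangent to $K_w$ at $\ovl X_t$, and for a convex function the tangent lies \emph{below} the graph, so $M^{K_w}_t(X)\leq K_w(X_t)$. (Proposition~\ref{prop:AY} states the concave case $M^F_t(X)\geq F(X_t)$; the convex case reverses the inequality.) Consequently $F_w(Y_t)\leq X_t$ and $U\circ F_w(Y_t)\leq U(X_t)$, which is the wrong direction. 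Your own first computation, $Y_t\geq 0$, is in fact the best pointwise lower bound the bare $w$-drawdown of $X$ affords: when $X_t$ is near $w(\ovl X_t)$ the process $Y_t$ is near $0$, not near $K_w(X_t)$.

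This is precisely why the paper does \emph{not} bypass the relaxed drawdowns $w_n$. It still defines $w_n$, $K_n$, $F_n$ as in Theorem~\ref{thm:main}, uses the strict slack $X_t>w(\ovl X_t)>w_n(\ovl X_t)$ to get $Y^n_t\geq \frac{1}{1+n}K_n(\ovl X_t)$, and then exploits the additive structure via $U(\lambda x)\leq U(x)+\gamma\log\lambda$ (obtained by applying Lemma~\ref{lemma:ae_u} to $e^{U}$) to conclude $U(\ovl X_t)\leq U\circ F_n(Y^n_t)+\gamma_n\log(n+1)$; the constant disappears in $\widetilde\R$. A convergence step $\widetilde\cer_{U\circ F_n}\to\widetilde\cer_{U\circ F}$ is still needed, though it is lighter than Lemma~\ref{lem:cnv_cer}: the bounds \eqref{eq:FFnbound} combined with the additive inequality and the hypothesis $\liminf U(x)/\log x>0$ yield $\widetilde\cer_{U\circ F}\leq\widetilde\cer_{U\circ F_n}\leq(1+c/n)\widetilde\cer_{U\circ F}$. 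So your instinct that the logarithmic criterion simplifies matters is right in detail (additive constants vanish, no $\cer_G<\infty$ hypothesis is required), but not to the extent of eliminating the $w_n$ approximation.
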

\begin{remark}
The equality between value functions in particular states that they are either both finite or both infinite.
\end{remark}

\begin{proof}
We write $K=K_w$ and $F=F_w$. The first part of the proof is identical to the first part of the proof of Theorem \ref{thm:main} and yields
\begin{equation*}
 \widetilde{\cer}^w_{U}(v_0) \geq \widetilde{\cer}_{U \circ F}(v_0).
\end{equation*}
Also, if we had the desired equality then $M^{F}(V^*)$ is optimal for constrained problem when $V^*$ is optimal for the unconstrained one.

It follows from the assumptions that for any $y>0$ there exists $\gamma$ such that
$$xU'_{+}(x)<\gamma<\infty,\quad x \geq y.$$
Applying Lemma \ref{lemma:ae_u} to $e^{U(x)}$,
we deduce that 
\begin{equation}
U(\lambda x) \leq U(x) + \gamma \log \lambda, \text{  } \lambda > 1, \ x \geq y. \label{eq:U_log} 
\end{equation}
Define $w_n, K_n, F_n$ as in the proof of Theorem \ref{thm:main} and recall the computation in \eqref{eq:FAEbound}. It follows that 
$x{U\circ F_n}'_+(x)<\gamma$ for $x\geq v_0$. The same reasoning holds with $F$ in place of $F_n$. We deduce that \eqref{eq:U_log} holds with $U\circ F$ in place of $U$ and $\gamma'$ instead of $\gamma$ and likewise for $U\circ F_n$ and $\gamma_n$.

Let $X\in \A^w(v_0)$, $Y^n=M^{K_n}(X)$ and recall from the proof of Theorem \ref{thm:main} that
\begin{equation*}
Y^n_{t} \geq \frac{1}{1+n}K_n \left(\overline{X}_t\right). \label{eq:y_n>x_log}
\end{equation*}
Using \eqref{eq:U_log} for $U\circ F_n$ and $y=v_0/(n+1)$, $\lambda=n+1$, noting that $K_n(\ovl{X}_t) \geq v_0$, we obtain
\begin{equation*}
\begin{split}
U \left(\ovl{X}_t \right)&= U\circ F_n\circ K_n \left(\ovl{X}_t \right) \leq U \circ F_n\left(\frac{1}{1+n} K_n(\ovl{X}_t )\right) +\gamma_n\log(n+1)\\ &\leq U \circ F_n \left( Y^n_{t} \right) +\gamma_n\log(n+1).
\end{split}
\end{equation*}
Taking expectations, dividing by $t$ and passing to the limit $t\to\infty$ yields:
\begin{equation*}
\widetilde{\R}_U(X)\leq \widetilde{\R}_U(\ovl{X}) \leq \widetilde{\R}_{U\circ F_n}(Y^n).
\end{equation*}
Taking supremum over $X\in \A^w(v_0)$ we conclude that $$\widetilde{\cer}^w_{U}(v_0)\leq \widetilde{\cer}_{U\circ F_n}(v_0).$$

It remains to establish the convergence in $n$ on the LHS. An analogous argument to Lemma \ref{lem:2U} shows that 
to compute $\widetilde{\cer}_{U\circ F_n}(v_0)$ it suffices to consider $V\in \A(v_0)$ such that $V\geq v_0/2$. The bound in \eqref{eq:FFnbound} gives
$$U(c_0F(v))\leq U\circ F_n(v)\leq U\circ F\left(\frac{1}{\epsilon} v^{1+1/n}\right),\quad v\geq \epsilon $$
for arbitrary $\epsilon\in (0,1)$ and some $0<c_0<1$. Taking $\epsilon<v_0/2$ and using \eqref{eq:U_log} for $U$ and $U\circ F$ with $y=\min\{\epsilon,c_0F(\epsilon)\}$ we obtain
\begin{equation}\label{eq:UFlogbound}
U\circ F(v)+\gamma\log c_0\leq U\circ F_n(v)\leq U\circ F(v)+\gamma'\log(v^{1/n}/\epsilon),\quad v\geq \epsilon.
\end{equation}
Finally, consider $\log F(x)/\log x$ for large $x$ and let $z=F(x)$. Then, using \eqref{eq:w-assumption},
$$\frac{\log F(x)}{\log x}=\frac{\log z}{\log K(z)}=\frac{\log z}{\log v_0+\int_{v_0}^z \frac{du}{u-w(u)}}\geq \frac{\log z}{\log v_0+\frac{1}{1-\alpha_1}\log z/v_0},$$
which can be made arbitrary close to $1-\alpha_1>0$ by considering $z$ large enough. Using the assumption on $U$ we conclude that
$$\liminf_{x\to \infty}\frac{U\circ F(x)}{\log x}=\liminf_{x\to \infty}\frac{U\circ F(x)}{\log F(x)}\frac{\log F(x)}{\log x}>0.$$
It follows that for some positive constants $c,c_1$ we have $cU\circ F(x)+c_1\geq \log(x)$ for all $x\geq \epsilon$. 
Combined with \eqref{eq:UFlogbound}, this shows that 
$$\widetilde{\cer}_{U \circ F}(v_0)\leq \widetilde{\cer}_{U \circ F_n}(v_0)\leq (1+c\gamma'/n)\widetilde{\cer}_{U \circ F}(v_0).$$
Taking $n\to \infty$ establishes the desired convergence.\hfill$\square$
\end{proof}
We close this section with a result similar to Theorem \ref{thm:main_dollars}. The definitions of $\widetilde{\cerd}_U$, $\widetilde{\cerd}_U^w$ should be clear and the proof follows closely the arguments in Section \ref{sec:main_dollars} and we omit it for the sake of brevity.
\begin{theorem} \label{thm:log_dollars} Let $U$ be a utility function with $xU'_+(x) \rightarrow \gamma \in (0,\infty)$ as $x \rightarrow \infty$, $v_0>0$, $w$ a drawdown function, $F_w$ as in Theorem \ref{thm:main} and assume $(ii)$ and $(iii)$ in Assumption \ref{ass:UwS0} hold.
 Then
\begin{equation*}
\widetilde{\cerd}^w_{U}(v_0)= \gamma(1-\alpha) \widetilde{\cerd}_{\log}(v_0) + \gamma \alpha r^*
\end{equation*}
and $M^{F_w}(V^*)\in \A^w(v_0)$ achieves the maximum in the drawdown constrained problem if $V^*\in\A(v_0)$ achieves the maximum in the unconstrained problem.
\end{theorem}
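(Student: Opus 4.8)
The plan is to run the proof of Theorem~\ref{thm:main_dollars} essentially verbatim, making three systematic substitutions: the power utility $\powut{\cdot}$ is replaced everywhere by $\log$, the behaviour of $\R_U$ under rescaling of $U$ is replaced by the additive behaviour $\widetilde{\R}_{aU+b}=a\widetilde{\R}_U$ (for $a>0$, $b\in\re$) of the criterion in \eqref{eq:opt_prob_log}, and the passage from the constrained to the unconstrained problem invokes Theorem~\ref{thm:main_log} instead of Theorem~\ref{thm:main}. Write $K=K_w$ and $F=F_w$.

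The argument rests on three asymptotic inputs. First, from $xU'_+(x)\to\gamma\in(0,\infty)$, integrating $U'_+$ (equivalently, applying Lemma~\ref{lemma:ae_u} to $\me^{U}$ as in the proof of Theorem~\ref{thm:main_log}) gives, for every $\varepsilon\in(0,\gamma)$ and every fixed $\underline{x}>0$, a constant $C_\varepsilon$ with $(\gamma-\varepsilon)\log x-C_\varepsilon\le U(x)\le(\gamma+\varepsilon)\log x+C_\varepsilon$ for $x\ge\underline{x}$. Second, by $(ii)$ of Assumption~\ref{ass:UwS0} and \eqref{eq:Kdef}, $\tfrac{1}{u-w(u)}=\tfrac1u\big(\tfrac1{1-\alpha}+o(1)\big)$ as $u\to\infty$, so $\log K(z)=\tfrac1{1-\alpha}\log z+o(\log z)$ and $\log F(x)/\log x\to 1-\alpha$; together with $F(0)=w(v_0)>0$ and the affine extension \eqref{eq:Fdef}, this yields $(1-\alpha-\varepsilon)\log x-C'_\varepsilon\le\log F(x)\le(1-\alpha+\varepsilon)\log x+C'_\varepsilon$ for $x\ge\underline{x}$, with the lower bound valid for \emph{all} $x>0$ since $F(x)\ge F(0)$. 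Third, since $N$ is deterministic and increasing with $N_0=1$, one factors $\log N_T$ out of the expectation and uses $\log N_T/T\to r^*$ to get $\widetilde{\R}_{c\log}(ZN)=c\big(\widetilde{\R}_{\log}(Z)+r^*\big)$ for any $c>0$ and any wealth process $Z$; in particular $\widetilde{\cerd}_{\log}(v_0)=\widetilde{\cer}_{\log}(v_0)+r^*$.

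Now combine. For $X\in\A^w(v_0)$ one has $X_TN_T\ge w(v_0)>0$, so the first and third inputs sandwich $\widetilde{\R}_U(XN)$ between $(\gamma-\varepsilon)\big(\widetilde{\R}_{\log}(X)+r^*\big)$ and $(\gamma+\varepsilon)\big(\widetilde{\R}_{\log}(X)+r^*\big)$; taking suprema over $\A^w(v_0)$ (legitimate since $\gamma\pm\varepsilon>0$) sandwiches $\widetilde{\cerd}^w_U(v_0)$ between $(\gamma\mp\varepsilon)\big(\widetilde{\cer}^w_{\log}(v_0)+r^*\big)$. Theorem~\ref{thm:main_log} applied to the utility $\log$ (which satisfies its hypotheses) gives $\widetilde{\cer}^w_{\log}(v_0)=\widetilde{\cer}_{\log\circ F}(v_0)$, and the $\widetilde{\R}$--analogue of Lemma~\ref{lem:2U} reduces the latter to a supremum over $Y\in\A(v_0)$ with $Y\ge v_0/2$, where the second input sandwiches it between $(1-\alpha\mp\varepsilon)\widetilde{\cer}_{\log}(v_0)$. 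Chaining the sandwiches and letting $\varepsilon\downarrow 0$ gives $\widetilde{\cerd}^w_U(v_0)=\gamma\big((1-\alpha)\widetilde{\cer}_{\log}(v_0)+r^*\big)=\gamma(1-\alpha)\widetilde{\cerd}_{\log}(v_0)+\gamma\alpha r^*$, the asserted value, the identity read in $[0,\infty]$ as in the remark after Theorem~\ref{thm:main_log}. For the optimiser, let $V^*$ attain $\widetilde{\cerd}_{\log}(v_0)$ and put $X^*:=M^{F}(V^*)\in\A^w(v_0)$ by Proposition~\ref{prop:DD}; concavity of $F$ gives $X^*_T\ge F(V^*_T)$ (Proposition~\ref{prop:AY}), so running the lower branch of the chain through $\widetilde{\R}_{\log\circ F}(V^*)$ and using $\widetilde{\R}_{\log}(V^*)=\widetilde{\cer}_{\log}(v_0)$ yields, after $\varepsilon\downarrow0$, $\widetilde{\R}_U(X^*N)\ge\widetilde{\cerd}^w_U(v_0)$, hence equality.

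I expect the only genuine work to be the bookkeeping in the middle steps: checking that the additive errors $C_\varepsilon,C'_\varepsilon$ are harmless (they vanish after dividing by $T$), orienting the $\pm\varepsilon$ inequalities correctly given that $\widetilde{\cer}_{\log}(v_0)$ may be negative, and dealing with small wealth values --- the upper bound on $\log F$ fails near $0$, which is precisely why the supremum must be restricted to $Y\ge v_0/2$, whereas the lower bounds on $U$ and $\log F$ are needed only for arguments bounded below by the positive constants $w(v_0)$ and $F(0)$ and so cause no difficulty. None of this is deep; it is the ``$\log$/arithmetic'' transcription of the estimates already carried out in Sections~\ref{sec:main_dollars} and~\ref{sec:logut}.
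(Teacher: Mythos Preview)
Your proposal is correct and follows exactly the route the paper indicates: the authors omit the proof, stating only that it ``follows closely the arguments in Section~\ref{sec:main_dollars}'', and your three systematic substitutions (power $\to$ log, multiplicative $\to$ additive behaviour of the criterion, Theorem~\ref{thm:main} $\to$ Theorem~\ref{thm:main_log}) are precisely that adaptation. The log-analogue of Lemma~\ref{lem:Uasel} you derive, the factoring of the deterministic $N$, and the reduction to $Y\ge v_0/2$ via the $\widetilde{\R}$-version of Lemma~\ref{lem:2U} are the natural counterparts of the ingredients used in the proof of Theorem~\ref{thm:main_dollars}; your caveat about orienting the $\pm\varepsilon$ inequalities is in fact moot since $\widetilde{\cer}_{\log}(v_0)\ge 0$ (take $V\equiv v_0$) and $r^*\ge 0$.
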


\section{Examples}\label{sec:examples}
We discuss now some examples. Our aim is twofold. First, we want to give an example of a rather general setup in which sufficient conditions can be found which guarantee finiteness of $\cer$ for the unconstrained problem, as assumed in Theorem \ref{thm:main}. Second, we want to discuss specific examples when the unconstrained, and hence also the drawdown constrained, portfolio optimisation problem is solved explicitly. In particular we relate our results and methods to the ones in \cite{CK}.
\subsection{Market with price deflators}
We start by assuming existence of a price deflator (or a state price density) process. In the setup of Section \ref{sec:market_model} we further assume that all $S^i_t$ are continuous and that there exists a $\p$-local martingale $(Z_t)$, $Z_t>0$ for all $t\geq 0$, such that $(Z_tS_t^i)$ are $\p$-local martingales, $i=1,\ldots, d$. Note that we do not necessarily assume that $(Z_t)$ is a true martingale and hence that an equivalent martingale measure exists. Our setup is in fact analogous to the most general setup in which stochastic portfolio optimisation is considered, see Fernholz and Karatzas \cite{FerKar}.
Note that if $(V_t)\in \A(v_0)$ then
$$\td (Z_t V_t)= V_t\td Z_t + Z_t \pi_t \td S_t + \pi_t \td \langle Z, S\rangle_t = (V-\pi_t S_t)\td Z_t + \pi_t \td (Z_t S_t),$$
so that $(Z_t V_t)$ is a positive $\p$-local martingale and hence a supermartingale.
Karatzas and Kardaras \cite{KarKar} show that the existence of $(Z_t)$ is equivalent to the NUPBR condition (No Unbounded Profit With Bounded Risk). This condition is weaker than the usual NFLVR condition from Delbaen and Schachermayer \cite{DelSch} and allows for some (very mild) arbitrage opportunities, see examples constructed in \cite{KarKar}. Recall that $\powut{p}(x)=\frac{1}{p}x^p$.
\begin{lemma}\label{lem:suff_finite_cer}
The following implications hold for any $p<1$, $p\neq 0$, and $v_0>0$
\begin{equation*}
\begin{split}
\R_{\powut{-p/(1-p)}}(Z)<\infty \quad &\Longrightarrow \quad \cer_{\powut{p}}(v_0)<\infty,\\
\R_{\powut{-p/(1-p)}}(Z/N)<\infty \quad &\Longrightarrow \quad \cerd_{\powut{p}}(v_0)<\infty.
\end{split}
\end{equation*}
\end{lemma}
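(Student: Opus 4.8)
The plan is to prove both implications at once by exploiting the supermartingale property of $(Z_t V_t)$ (respectively $(Z_t V_t / N_t)$ in the dollar case, since $(Z_t (V_t N_t)/N_t) = (Z_t V_t)$ is the same process — in fact the dollar case is just the statement applied to $Z/N$ in place of $Z$, so it suffices to prove the first implication in this abstract form). Fix $p<1$, $p\neq 0$, and write $q = p/(1-p)$, so that the conjugate-type exponent appearing in $\powut{-q}$ is dictated by the Young/H\"older relation $\frac{1}{p} + \frac{1}{-q/p}\cdot$ — more concretely, the key algebraic identity is that for $x,z>0$ one has $x^p \le \tfrac{p}{?}\,(\,\cdots z x\,)$; I would instead use the sharp form: for any $x,z>0$,
\begin{equation*}
x^p = (xz)^p z^{-p} \le \text{(H\"older with exponents } 1/p \text{ and } 1/(1-p)),
\end{equation*}
i.e.\ $\e[(V_T)^p] = \e[(Z_T V_T)^p (Z_T)^{-p}] \le \big(\e[Z_T V_T]\big)^p \big(\e[(Z_T)^{-p/(1-p)}]\big)^{1-p}$ when $0<p<1$, using H\"older with exponents $1/p$ and $1/(1-p)$; and the reverse H\"older (for negative exponent) when $p<0$. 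Since $(Z_T V_T)$ is a supermartingale with $Z_0 V_0 = v_0$, the first factor is bounded by $v_0^p$ (for $p>0$) uniformly in $T$; the sign conventions in the definition of $\R$ handle the $p<0$ case symmetrically.

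The steps, in order, are: (i) recall from the computation displayed just before the lemma that $(Z_t V_t)$ is a nonnegative $\p$-local martingale hence a supermartingale, so $\e[Z_T V_T] \le v_0$ for all $T$; (ii) for $0<p<1$, apply H\"older's inequality with exponents $1/p$ and $1/(1-p)$ to $V_T^p = (Z_T V_T)^p (Z_T)^{-p}$ to get
\begin{equation*}
\e[\powut{p}(V_T)] = \tfrac1p\,\e[V_T^p] \le \tfrac1p\, v_0^{p}\,\big(\e[(Z_T)^{-p/(1-p)}]\big)^{1-p};
\end{equation*}
(iii) take $\tfrac1T\log$ of both sides and pass to $\limsup_{T\to\infty}$: the constant and the $v_0^p$ factor contribute $0$, the term $\tfrac{1-p}{T}\log \e[(Z_T)^{-p/(1-p)}] = \tfrac{1-p}{T}\log\big(-p'\,\e[\powut{-p/(1-p)}(Z_T)]\big)$ where I write out the $\powut{\cdot}$ normalisation, and its $\limsup$ is $(1-p)\,\R_{\powut{-p/(1-p)}}(Z)<\infty$ by hypothesis, so taking the supremum over $V\in\A(v_0)$ gives $\cer_{\powut{p}}(v_0)<\infty$; (iv) for $p<0$, $\powut{p}$ is negative and $V_T^p = (Z_T V_T)^p (Z_T)^{-p}$ with $(Z_T V_T)^p$ now a convex decreasing function — here I use the reverse H\"older inequality (valid for conjugate exponents with one negative), together with the fact that $\e[Z_T V_T]\le v_0$ still controls the relevant factor in the right direction because raising to a negative power reverses the inequality; one must be slightly careful, but the upshot is again $\e[V_T^p]\ge$ something, hence $\e[\powut{p}(V_T)]\le$ something, with the same exponential-rate bookkeeping; (v) finally, the dollar statement follows by replacing $Z$ with $Z/N$ throughout, since $(Z_t V_t N_t / N_t) = (Z_t V_t)$ shows $(\,(Z_t/N_t)(V_t N_t)\,)$ is the same supermartingale and $R_U(V_T N_T)$ is computed with $V_T N_T$ playing the role of the wealth.

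The main obstacle I anticipate is the $p<0$ case, where one cannot directly apply ordinary H\"older and must invoke the reverse H\"older inequality with a negative exponent; one has to check that the supermartingale bound $\e[Z_T V_T]\le v_0$ is applied on the correct side of the inequality after the negative power is taken, and that $\e[(Z_T V_T)^p]$ is finite (which it is, since $Z_T V_T>0$ a.s.\ and we only need a one-sided bound that may a priori be $+\infty$ but is then harmless as it sits on the favourable side). A clean way to sidestep subtleties is to note that in both cases the desired inequality is a pointwise consequence of Young's inequality $ab \le \tfrac{a^{1/p}}{1/p} + \tfrac{b^{1/(1-p)}}{1/(1-p)}$ in its sign-adjusted form, applied with $a$ a multiple of $(Z_T V_T)^{\cdots}$ and $b$ a multiple of $Z_T^{\cdots}$, then taking expectations and using $\e[Z_T V_T]\le v_0$; this reduces everything to the elementary scalar inequality plus the supermartingale property, and the exponential-rate conclusion is immediate from $\tfrac1T\log$ of a sum of two nonnegative terms being the max of the two rates.
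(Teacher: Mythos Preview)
Your proposal is correct and follows essentially the same route as the paper: decompose $V_T^p=(Z_TV_T)^p(Z_T)^{-p}$, apply H\"older's inequality with exponents $1/p$ and $1/(1-p)$ (the reverse H\"older when $p<0$), bound the factor $\e[Z_TV_T]$ by $v_0$ via the supermartingale property, and then take $\tfrac1T\log$ and $\limsup$; the dollar case is indeed obtained by the same computation with $Z/N$ in place of $Z$, since $(Z/N)(VN)=ZV$ is the same supermartingale. The paper's write-up is more terse but the argument is identical.
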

\begin{proof}
Let $(V_t)\in \A(v_0)$ so that $(Z_t V_t)$ is a $\p$-local martingale, as above. For $p<0$ we have
\begin{equation*}
\begin{split}
\e [V_T^p]&=\e[(Z_T)^{-p}(Z_TV_T)^p]\\
& \geq \left(\e[(Z_T)^{-\frac{p}{1-p}}]\right)^{(1-p)}\left(\e[Z_T V_T]\right)^{p}\geq v_0^p \left(\e[(Z_T)^{-\frac{p}{1-p}}]\right)^{(1-p)},
\end{split}
\end{equation*}
where we used H\"older's inequality, whose direction is reversed for $p<0$, and the fact that a non-negative local martingale is a supermartingale. The inequalities  above are reversed when we divide both sides by $\frac{1}{p}<0$ and the claim follows. The case $p\in (0,1)$ is even more straightforward --- it is enough to use H\"older's inequality which directly gives the desired inequality (reverse from the one displayed above). The case with numeraire is entirely analogous.\hfill $\square$
\end{proof}
The lemma above gives an example of sufficient conditions to apply Theorems \ref{thm:main} and \ref{thm:main_dollars} since they both require that $\cer_{G}$ or $\cerd_{G}$ is finite. For the latter $G$ is a power utility function and Lemma \ref{lem:suff_finite_cer} applies directly. For the former we would need to bound $G$ by a power utility.

Naturally, in the current very general setup there might be little hope to compute $\cer_{U}$ or find the optimal wealth process. However, one might expect this to be the simplest portfolio optimisation problem to solve. The strength of our results is to show that solving the seemingly much more complex problem with drawdown constraint on wealth paths is in fact equally simple (or hard).

Karatzas and Kardaras \cite[Theorem 4.12]{KarKar} also show that the existence of $(Z_t)$ is equivalent to the existence of a benchmark numeraire $\tilde{N}$ such that $V/\tilde N$ is a supermartingale for any $V\in \A(v_0)$, see also Christensen and Larsen \cite{ChristensenLarsen:07}. This readily implies that  $\widetilde{\cer}_{\log}(v_0)=\widetilde{\R}_{\log}(\tilde N)$ and $\widetilde{\cerd}_{\log}(v_0)=\widetilde{\R}_{\log}(\tilde N N)$.
Indeed, considering $V\in \A(v_0)$ and applying Jensen's inequality gives
\begin{equation*}
\begin{split}
 \limsup_{T \rightarrow \infty} \frac{1}{T} \e \log V_T &\leq \limsup_{T \rightarrow \infty} \frac{1}{T} \e \log  \tilde N_T+ \limsup_{T \rightarrow \infty} \frac{1}{T} \e \log \frac{V_T}{\tilde N_T}\\& \leq
 \limsup_{T \rightarrow \infty} \frac{1}{T} \e \log  \tilde N_T.
\end{split} \end{equation*}
 This observation essentially goes back to Bansal and Lehmann \cite{BansalLehmann:97}.
In a no-arbitrage complete market model, taking $N$ to be the savings account, $(Z_t)$ is the density $\frac{\td \mathbb{Q}}{\td \p}$ where $\mathbb{Q}$ is the equivalent martingale (risk-neutral) measure. Completeness means that $(Z_t)^{-1}$ is an admissible wealth process and thus the benchmark numeraire. In particular, in the setting of Theorem \ref{thm:log_dollars}, we then have
\begin{equation*}
 \widetilde{\cerd}^w_U (v_0) = \gamma r^*-\gamma (1-\alpha) \limsup_{T \rightarrow \infty} \frac{1}{T} \e \log Z_T .
\end{equation*}
It may be natural to start modelling by simply requiring that the benchmark numeraire $\tilde N$ exists. This is pursued in the so-called \emph{benchmark approach}, see Platen and Heath \cite{PlatenHeath:06}. It is clear that Lemma \ref{lem:suff_finite_cer} remains true in this approach when we replace $DZ$ by $1/\tilde N$.

\subsection{Complete market model with deterministic coefficients}\label{sec:example_complete}
We consider now the classical complete financial market model with deterministic coefficients. Let $W_t=(W^1_t,\ldots,W^d_t)\tr$ be a standard $d$-dimensional Brownian motion and $(\F_t)$ the augmentation of its natural filtration. Here $\tr$ denotes vector transpose. $N_t=\exp(\int_0^t r_u\td u)$ is deterministic and $\frac{1}{T}\int_0^T r_u \td u \to r^*\geq 0$ as $T\to \infty$. Each asset follows dynamics given by
$$\frac{\td \tilde S^i_t}{\tilde S^i_t}= \mu^i_t \td t + \sum_{j=1}^d \sigma^{ij}_t\td W^j_t,\quad \tilde S^i_0=s^i_0>0$$
where $\mu^i_t$ and $\sigma^{ij}_t$ are bounded deterministic functions and $\sigma_t$ is invertible. Recall Definition \ref{def:wealth} of wealth process and let $\tilde \pi_t^i:= \pi^i_t S^i_t/V_t$ be the proportion of wealth invested in the $i^{\textrm{th}}$ asset so that $\td V_t= \sum_{i=1}^d \tilde\pi^i_t V_t \frac{\td S^i_t}{S_t^i}$.
The market price of risk is given as $\theta_t:= \sigma^{-1}(\mu_t-r_t\mathbb{I})$, where $\mathbb{I}$ is a $d$-dimensional vector with all entries equal to one. We assume $\theta_t$ is also bounded and that
$$||\theta^*||^2:= \lim_{T\to\infty}\frac{1}{T} \int_0^T ||\theta_u||^2\td u\quad \textrm{ is well defined and finite.}$$
The state price density
$$Z_t:= \exp\left\{-\int_0^t \theta_u\tr \td W_u - \frac{1}{2}\int_0^t ||\theta_u||^2 \td u\right\}$$
is a $\p$--martingale which defines for every $T\in (0,\infty)$ a unique risk neutral measure up to time $T$ via $\frac{\td \Q}{\td \p}|_{\F_T}=Z_T$. To solve $\cer_{\powut{p}}$ one first considers the problem of maximising the expected utility of wealth at a given horizon $T$. The solution is obtained using, by now standard, convex duality arguments, see Karatzas, Lehoczky and Shreve \cite{KaratzasLehochkyShreve:87} or Karatzas and Shreve \cite[Sec.~3.5--3.8]{KaSh}. The optimal wealth process $V^*$ is described explicitly via
\begin{equation}
\tilde\pi^*_t=\frac{1}{1-p}\theta_t\tr\sigma^{-1}_t \label{eq:pi_compl}
\end{equation}
and in particular it is independent of the time horizon $T$. We conclude that it is also optimal for the long-term asymptotic growth rate optimisation. Taking limit of the value functions for the finite horizon problem we obtain
$$\cerd_{\powut{p}}(v_0)=\R_{\powut{p}}(NV^*)=|p|r^*+ \frac{|p|}{2(1-p)}||\theta^*||^2.$$
Note that the difference of a factor $|p|$ when compared to \cite{GZ,CK} is immediate since they consider $ \frac{1}{|p|} \R_{\powut{p}}(NV)$ instead of $\R_{\powut{p}}(NV)$.

Applying Theorem \ref{thm:main_dollars} for a utility function $U$ and a drawdown function $w$, which satisfy Assumption \ref{ass:UwS0}, we obtain
$$\cerd_{U}^w(v_0)=\cerd_{\powut{\gamma(1-\alpha)}}(v_0)+|\gamma|\alpha r^*=
|\gamma|\left(r^*+\frac{(1-\alpha)}{2(1-\gamma(1-\alpha))}||\theta^*||^2\right)$$
which is achieved by the optimal wealth process $X=M^{F_w}(V^*)$. Using \eqref{eq:SDEwDD} we see that
$$\td X_t= \left(X_t - w(\ovl{X}_t)\right)\sum_{i=1}^d \left(\frac{1}{1-\gamma(1-\alpha)}\theta_t\tr\sigma^{-1}_t\right)^i \frac{\td S^i_t}{S^i_t}.$$
In particular, we recover Theorem 5.1 in \cite{CK} by taking $U=\powut{\gamma}$, $\gamma\in (0,1)$ and $w(x)=\alpha x$. It is insightful to understand better how the objects in \cite{CK} relate to the tools of our paper. In fact the Auxiliary problem introduced and solved in \cite{CK} is nothing else but $\cerd_{U\circ F}(v_0)=\cerd_{\powut{\gamma(1-\alpha)}}(v_0)$.
Indeed, the process $N^\pi_\alpha$ defined in $(4.1)$ therein is simply $N M^K(X) $ and $\hat{\pi}_t = \frac{1}{1 - \gamma(1-\alpha)} \theta'_t \sigma^{-1}_t$.

\subsection{Incomplete market example}\label{sec:example_incomplete}
In a recent paper Guasoni and Robertson \cite{GuasoniRobertson:11} solve the unconstrained portfolio optimisation problem for an investor with a power utility in a rather general diffusion model. Our results allow to solve $w$--drawdown constrained problem in their setting. The solution in \cite{GuasoniRobertson:11} is involved and we do not cite the details here for the sake of brevity. Instead we propose to study an application of Theorem \ref{thm:main_dollars} in an incomplete market example adapted from the risk-sensitive control approach in Fleming and Sheu \cite{FleSheu}.

Consider a market with constant interest rate $r$ and one risky asset $\tilde S(t)$ evolving according to
\begin{eqnarray*}
 \frac{\td \tilde S(t)}{\tilde S(t)} &=& (\mu_1 + \mu_2 x(t))\td t + \sigma \td W^1_t + \rho \td W^2_t, \\
 \td x(t) &=& b x(t) \td t + \td W^1_t,
\end{eqnarray*}
where $W^1,W^2$ are two independent Brownian motions and $x(t)$ has an interpretation of an economical factor.

In Theorem 3.1 in \cite{FleSheu} the authors provide a link between Problem \ref{pb:utmax} with a power utility function $\powut{\gamma}$ and a viscosity solution of the dynamic programming equation. In Theorem 4.1 the optimal investment policy is found. We refer the reader to \cite{FleSheu} for further details of the method. In the last section of their paper Fleming and Sheu consider Vasicek interest rate model with a single stock and give an explicit solution to the utility maximisation problem. Our model above is slightly different but we are still able to use their solution.

The difference with Fleming and Sheu \cite{FleSheu} example is that the interest rate is given by $r(t) = r$ in our work and by $r(t) = \lambda x(t) - \frac{b_1}{b_2}$ in theirs, which  requires us to change some coefficients in final formulae in \cite{FleSheu}.
More precisely, assume $\gamma < 0$ and $\mu^2_2 \geq \sigma^2 (K^{(\gamma)})^2$ where $K^{(\gamma)}$ is defined below. Then the value function is equal to
\begin{equation*}
\cerd_{\powut{\gamma}}(v_0) = \frac{1}{2} K^{(\gamma)} + \frac{1}{2}|\eta|^2 + 1/2 \frac{\gamma}{1-\gamma}\frac{(\mu_1-r + \sigma \eta)^2}{\sigma^2 + \rho^2} + |\gamma| r,
\end{equation*}
where
\begin{equation*}
\eta = - \frac{\gamma}{1-\gamma} \frac{\mu_2 + K^{(\gamma)} \sigma (\mu_1-r)}{(D^{(\gamma)} + K^{(\gamma)} E^{(\gamma)})(\sigma^2 + \rho^2)}
\end{equation*}
and
\begin{eqnarray*}
 E^{(\gamma)} &=& 1+ \frac{\gamma}{1-\gamma} \frac{\sigma^2}{\sigma^2 + \rho^2},\\
 K^{(\gamma)} &=& - \frac{b + \frac{\gamma}{1-\gamma} \frac{1}{\sigma^2 + \rho^2} \mu_2 \sigma}{1 + \frac{\gamma}{1-\gamma}\frac{\sigma^2}{\sigma^2 + \rho^2} }
- \frac{1}{\sqrt{1 + \frac{\gamma}{1-\gamma}\frac{\sigma^2}{\sigma^2 + \rho^2}}} \\
&& \cdot \left( -\frac{\gamma}{1-\gamma} \frac{\mu^2_2}{\sigma^2 + \rho^2} + \frac{(b + \frac{\gamma}{1-\gamma} \frac{1}{\sigma^2+ \rho^2})^2}{1+ \frac{\gamma}{1-\gamma}\frac{\sigma^2}{\sigma^2+\rho^2}}\right)^{1/2},\\
 D^{(\gamma)} &=& - \sqrt{1+ \frac{\gamma}{1-\gamma} \frac{\sigma^2}{\sigma^2 + \rho^2}} \left( -\frac{\gamma}{1-\gamma} \frac{\mu^2_2}{\sigma^2+\rho^2} + \frac{(b+ \frac{\gamma}{1-\gamma}\frac{\mu_2 \sigma}{\sigma^2+\rho^2})^2}{1+\frac{\gamma}{1-\gamma}\frac{\sigma^2}{\sigma^2+\rho^2}}\right)^{1/2}.
\end{eqnarray*}
And the optimal investment policy $\tilde \pi_t$, which is fraction of wealth invested in risky asset at time $t$, is given by
\begin{equation*}
 \tilde \pi_t = D^{(\gamma)} x(t) + a^{(\gamma)},
\end{equation*}
for some constant $a^{(\gamma)}$.

In the setting of Theorem \ref{thm:main_dollars} we obtain that
\begin{equation*}
 \cerd^w_U (v_0) = \frac{1}{2} K^{(\gamma (1-\alpha))} + \frac{1}{2}|\eta|^2 + 1/2 \frac{\gamma (1-\alpha)}{1-\gamma (1-\alpha)}\frac{(\mu_1 -r + \sigma \eta)^2}{\sigma^2 + \rho^2} + |\gamma| (1-\alpha)r,
\end{equation*}
where $\gamma<0$ and $\alpha\in (0,1)$ are defined in Assumption \ref{ass:UwS0}. Note that we could also consider $\gamma\in (0,1)$ under the additional parameter restriction which makes appropriate $K^{(\cdot)}$ well defined.

\appendix
\section{Appendix}\label{app}
\normalsize

We state and prove here lemmas used in the proofs in the main body of the paper. Note however that the first two lemmas may be of independent interests. 
Lemma \ref{lem:2U} shows that computing $\cer_U$ it is enough to consider wealth processes which dominate a given fraction of the numeraire. Lemma \ref{lem:cnv_cer} studies convergence of $\cer_{U_n}\to\cer_U$ as $U_n\to U$. 

Recall the general setup introduced in Section \ref{sec:market_model} and objectives $\cer$ and $\cerd$ given in \eqref{eq:opt_prob_DD}--\eqref{eq:opt_prob} and \eqref{eq:utmax_dollars} respectively.

\begin{lemma}\label{lem:2U}
Let $U$ be a continuous non-decreasing function with a well defined locally bounded right derivative $U'_+$. Assume $U$ is either positive or negative and satisfies
\begin{equation}\label{eq:AEU_appendix}
\limsup_{x\to \infty} \frac{xU'_+(x)}{|U(x)|} < \infty.
\end{equation} 
Then for any $0<y<v_0$ 
\begin{eqnarray*}
 \sup_{V \in \A(v_0)} \R_U (V) &=&  \sup_{V \in \A(v_0),\ V \geq y} \R_U (V), \\
\cer_{U}(1) &=& \cer_{U}(v_0).
\end{eqnarray*}
Further, if $N$ is non-decreasing, or in general if $N_t\geq \underline{N}$ for all $t\geq 0$ and some constant $\underline{N}>0$ then also
\begin{eqnarray*}
 \sup_{V \in \A(v_0)} \R_U (NV) &=&  \sup_{V \in \A(v_0),\ NV \geq y} \R_U (NV),\\
\cerd_{U}(1) &=& \cerd_{U}(v_0).
\end{eqnarray*}
\end{lemma}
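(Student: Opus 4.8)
\textbf{Proof plan for Lemma \ref{lem:2U}.}

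The plan is to prove the first displayed identity and then deduce the remaining three statements from it by essentially the same argument. The inequality ``$\geq$'' in
$\sup_{V\in\A(v_0)}\R_U(V)=\sup_{V\in\A(v_0),\,V\geq y}\R_U(V)$
is trivial since the right-hand supremum is over a smaller set. For ``$\leq$'', I would fix an arbitrary $V\in\A(v_0)$ and produce $\widetilde V\in\A(v_0)$ with $\widetilde V\geq y$ and $\R_U(\widetilde V)\geq\R_U(V)$. The natural candidate is an Az\'ema--Yor transform: take a concave increasing $F$ with $F(x)>0$ for $x\geq 0$, $F(x)=x$ for $x$ large, and $F(0)$ close to but strictly less than $v_0$ (for instance the $F_w$ associated with a \emph{constant} drawdown function $w\equiv c$ for suitable $c\in(y,v_0)$, as in Example \ref{ex:const_dd}, which gives the explicit affine map $\widetilde V_t=\frac{v_0-c}{v_0}V_t+c>c>y$). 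By Proposition \ref{prop:DD}, $\widetilde V:=M^F(V)\in\A^c(v_0)\subset\A(v_0)$ and $\widetilde V>c>y$, so the only thing to check is $\R_U(\widetilde V)\geq\R_U(V)$.

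The heart of the matter is the reward comparison, and this is where assumption \eqref{eq:AEU_appendix} enters. Since $F$ is concave with $F(x)=x$ for $x$ large, there is a constant $a>0$ with $F(x)\geq x-a$ for all $x\geq 0$, hence $F(x)\geq x/2$ for $x\geq 2a$; moreover $\widetilde V_t=M^F_t(V)\geq F(V_t)$ by the last bullet of Proposition \ref{prop:AY}. Applying Lemma \ref{lemma:ae_u} to $U$ (whose hypotheses are exactly \eqref{eq:AEU_appendix} together with $U$ of constant sign) yields a nonzero $\gamma\in\re$ and $x_0$ with $U(\lambda x)\leq\lambda^{\gamma}U(x)$ for $x\geq x_0$, $\lambda\geq 1$; taking $\lambda=2$ this gives $U(x)\leq 2^{\gamma}U(x/2)\leq 2^{\gamma}U(F(x))\leq 2^{\gamma}U(\widetilde V_t)$ on the event $\{V_t\geq\max(x_0\vee 2a)\}$. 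On the complementary event $V_t$ is bounded, so $U(V_t)$ is bounded by a constant (using continuity and monotonicity of $U$), while $U(\widetilde V_t)$ is bounded below by $U(c)$; thus in all cases $U(V_t)\leq C_1 U(\widetilde V_t)+C_2$ for constants $C_1>0,C_2$ depending only on $U,c,y$ — here one must be a little careful about signs when $U<0$, using the convention $\log(x)=-\log(-x)$ and that $U(\widetilde V_t)\leq U$ of a large constant is then a \emph{negative} quantity bounded away from $0$, so the bound still goes the right way. Taking expectations, then $\frac1T\log$, then $\limsup_{T\to\infty}$, the additive constant $C_2$ and the multiplicative constant $C_1$ both disappear (as in \eqref{eq:reward_ineq_reverse} and Remark \ref{remark:U_shift}), giving $\R_U(\widetilde V)\geq\R_U(V)$. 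Taking the supremum over $V$ finishes the first identity. I expect the sign bookkeeping for $U<0$ and the passage $U(V_t)\leq C_1U(\widetilde V_t)+C_2$ to be the main obstacle, as it is the only place where the argument is not a direct quotation of Proposition \ref{prop:AY} and Lemma \ref{lemma:ae_u}.

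For the second identity $\cer_U(1)=\cer_U(v_0)$: by the definition \eqref{eq:opt_prob} and the scaling $\A(v_0)=v_0\A(1)$, it suffices to compare $\R_U(v_0 X)$ and $\R_U(X)$ for $X\in\A(1)$; by the first identity we may assume $X\geq y'$ for any chosen $y'\in(0,1)$, so both $X$ and $v_0X$ are bounded below by a positive constant, and one applies Lemma \ref{lemma:ae_u} exactly as above (with $\lambda=v_0$ or $\lambda=1/v_0$ according to whether $v_0>1$) to absorb the factor $v_0^{\pm\gamma}$ upon taking $\frac1T\log$. The two statements with the numeraire are proved by repeating both arguments verbatim with $V$ replaced by $NV$ throughout: the hypothesis $N_t\geq\underline N>0$ guarantees $NV_t\geq\underline N y$ when $V_t\geq y$, so the same Az\'ema--Yor transform $M^F(NV)$ (now applied to the max-continuous process $NV$) produces a process bounded below, and the identity $\cerd_U(1)=\cerd_U(v_0)$ follows from the linear scaling of $\A(v_0)$ in $v_0$. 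No new ideas are needed for these three, only the observation that $NV$ is max-continuous whenever $V$ and $N$ are and that multiplication by the deterministic or bounded-below $N$ does not interfere with the elasticity estimates.
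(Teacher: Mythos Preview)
Your construction of $\tilde V$ is exactly the paper's: the affine map $\tilde V_t=\varepsilon v_0+(1-\varepsilon)V_t$ with $\varepsilon=c/v_0$. But your reward comparison contains a genuine error. You claim the associated $F$ satisfies ``$F(x)=x$ for $x$ large'' and hence $F(x)\geq x-a$; for the affine $F(x)=c+\frac{v_0-c}{v_0}x$ of Example~\ref{ex:const_dd} this is false, since the slope is $(v_0-c)/v_0<1$ and $F(x)-x\to-\infty$. More generally, any concave $F$ with $F(x)=x$ eventually gives $M^F_t(V)=V_t$ once $\ovl V_t$ is large (because $F'(\ovl V_t)=1$ and $F(\ovl V_t)=\ovl V_t$), so no floor is produced; the two requirements ``$F(x)=x$ for large $x$'' and ``$M^F(V)>c$'' are incompatible. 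The case-splitting into $\{V_t\ \text{large}\}$ and its complement, with an additive constant $C_2$, is therefore built on a wrong premise and also needlessly complicated. The paper's one-line fix: from the affine form, $\frac{1}{1-\varepsilon}\tilde V_t\geq V_t$, so monotonicity and Lemma~\ref{lemma:ae_u} (with $x_0=\varepsilon v_0$, $\lambda=\frac{1}{1-\varepsilon}$, using $\tilde V_t\geq\varepsilon v_0$) give $(1-\varepsilon)^{\gamma}U(V_t)\leq U(\tilde V_t)$ directly, valid for both signs of $U$ with no case analysis.

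For the numeraire statements you propose applying $M^F$ to $NV$. This does not work: $M^F(NV)/N$ is in general \emph{not} a self-financing wealth process in $\A(v_0)$ (this is precisely the content of the remark following Theorem~\ref{thm:main}, where the extra drift is identified as consumption). The paper instead keeps the \emph{same} $\tilde V\in\A(v_0)$ and simply observes $N_t\tilde V_t=\varepsilon v_0 N_t+(1-\varepsilon)N_tV_t\geq \varepsilon v_0\underline N$, after which the elasticity bound applies to $N\tilde V$ exactly as before. Your treatment of $\cer_U(1)=\cer_U(v_0)$ via scaling $\A(v_0)=v_0\A(1)$ and Lemma~\ref{lemma:ae_u} matches the paper.
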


\begin{proof}

For $V \in \A(v_0)$ and some $0<\varepsilon <1$ consider the process $\tilde V \in \A(v_0)$ given by\footnote{Recall from Example \ref{ex:const_dd} that $\tilde V$ is also an Az\'ema-Yor transformation $\tilde V= M^F(V)$ corresponding to an affine $F$.} $\tilde{V}_t = \varepsilon v_0 + (1-\varepsilon) V_t \geq \varepsilon v_0$, $t\geq 0$. As $U$ satisfies \eqref{eq:AEU_appendix} and $\tilde{V}_t \geq \varepsilon v_0$ we are able to use Lemma \ref{lemma:ae_u} to deduce that for some non-zero $\gamma \in \mathbb{R}$
\begin{equation*}
(1-\varepsilon)^{\gamma}U\left(V_t\right)\leq (1-\varepsilon)^{\gamma}U\left(\frac{1}{1-\varepsilon}\tilde{V}_t\right) \leq U\left(\tilde{V}_t\right),
\end{equation*}
where we used $\frac{1}{1-\varepsilon}\tilde V \geq V$.
Taking expectation, applying $\frac{1}{T} \log$ and taking limit as $t\to\infty$, we deduce that
\begin{equation} \label{eq:Vtil_V}
\R_U(V)\leq \R_U(\tilde{V}).
\end{equation}
Thus, taking $\varepsilon = y/v_0$ we obtain
\begin{equation*}
 \sup_{V \in \A(v_0)} \R_U (V) \leq  \sup_{V \in \A(v_0), \text{ s.t.} V \geq y} \R_U (V)
\end{equation*}
and the reverse inequality is trivial and the first equality in the statement follows.

Now consider $V \in \A(v_0)$ such that $V \geq y$ for some $y>0$. Note that $V^1:=\frac{1}{v_0}V\in \A(1)$. 
If $v_0 >1$, Lemma \ref{lemma:ae_u} and monotonicity of $U$ yield, for some $\gamma \in \re\setminus\{0\}$,
\begin{equation*}
U(V^1_t)\leq U(V_t) = U(v_0 V^1_t) \leq v_0^{\gamma} U(V^1_t),\quad t\geq 0.
\end{equation*}
If $0<v_0<1$ we obtain similarly 
\begin{equation*}
v_0^\gamma U(V^1_t)=v_0^\gamma U\left(\frac{1}{v_0}V_t\right)\leq U(V_t)\leq U(V^1_t).
\end{equation*} 
It follows that
\begin{equation*}
 \sup_{V \in \A(v_0),\ V \geq y} \R_U (V) =  \sup_{V^1 \in \A(1),\ V^1 \geq y/v_0} \R_U (V^1).
\end{equation*}
The equality $\cer_{U}(1) = \cer_{U}(v_0)$ now follows from the first equality in the statement which we established above.

For the second pair of equations in the statement of the Lemma, we have $\tilde{V}_tN_t = \varepsilon v_0 N_t + (1-\varepsilon) V_tN_t \geq \varepsilon v_0 N_t \geq \underline{N}\varepsilon v_0>0$ and the arguments are then entirely analogous. 
\hfill $\square$
\end{proof}

\begin{lemma}\label{lem:cnv_cer}
Let $U_n,U$ be nondecreasing functions of the same sign, continuous with a well defined locally bounded right derivative, satisfying Assumption \ref{ass:U} and \eqref{eq:AEU_appendix}. Assume further that for some $c,c_1>0$ and some $0<\nu<1$
\begin{equation*}
\forall\ \delta>0\ \exists n_\delta\ \forall n\geq n_\delta\quad c_1 U(x)\leq U_n(x)\leq U(cx^{1+\delta}),\ x\geq \nu.
\end{equation*}
If $U_n,U$ are negative we have, for any $v_0>0$,
\begin{equation}\label{eq:cnv_cer}
\begin{split}
\cer_{U_n}(1)&\van  \cer_{U}(1)
\end{split}
\end{equation}
If $U_n,U$ are positive the above holds assuming that $\cer_{G}(1)<\infty$, where $G(x): = U (x)^{1+\delta}$ for some $\delta>0$. Consequently, we then have $\cer_{U}(1)<\infty$.

If $N$ is bounded away from zero, $N_t\geq \underline N$, $t\geq 0$ for some $\underline{N}>0$, then the above results hold with $\cer$ replaced by $\cerd$. 
\end{lemma}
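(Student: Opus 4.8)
The plan is to establish the two inequalities $\limsup_n \cer_{U_n}(1)\leq \cer_U(1)$ and $\liminf_n \cer_{U_n}(1)\geq \cer_U(1)$ separately, exploiting the sandwich hypothesis $c_1U(x)\leq U_n(x)\leq U(cx^{1+\delta})$ for $x\geq\nu$ together with Lemma \ref{lem:2U} (which lets us restrict to wealth processes bounded below, say $V\geq\nu$) and Lemma \ref{lemma:ae_u} (which converts multiplicative perturbations of the argument into multiplicative perturbations of the value, using the finite asymptotic elasticity \eqref{eq:AEU_appendix}). First I would fix $\delta>0$ and $n\geq n_\delta$ and take any $V\in\A(1)$ with $V\geq\nu$, which by Lemma \ref{lem:2U} is no loss of generality for computing any of the $\cer$'s. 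The lower bound $c_1U(x)\leq U_n(x)$ gives immediately, after taking $\e$, applying $\frac1T\log$ and letting $T\to\infty$, that $\R_{U_n}(V)\geq\R_U(V)$ when $U,U_n>0$ (the constant $c_1$ washing out in the $\frac1T\log$ limit), and the reverse when $U,U_n<0$; combined with the analogous consequence of the upper bound this pins down $\cer_{U_n}(1)$ between $\cer_U(1)$ and $\cer_{U(c\,\cdot^{1+\delta})}(1)$, up to the sign bookkeeping. Concretely, in the negative case one gets $\cer_{U(c\,\cdot^{1+\delta})}(1)\leq\cer_{U_n}(1)\leq\cer_U(1)$ for $n\geq n_\delta$, and in the positive case $\cer_U(1)\leq\cer_{U_n}(1)\leq\cer_{U(c\,\cdot^{1+\delta})}(1)$.

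The core of the argument is then to show that $\cer_{U(c\,\cdot^{1+\delta})}(1)\to\cer_U(1)$ as $\delta\to0$; since the bound above holds for every fixed $\delta>0$ and all large $n$, this yields \eqref{eq:cnv_cer}. To handle $U(cx^{1+\delta})$ I would first absorb the constant $c$: writing $cx^{1+\delta}=c\cdot x^{1+\delta}$ and applying Lemma \ref{lemma:ae_u} to $U$ with a suitable base point, we have $U(cx^{1+\delta})\leq c^{\gamma}U(x^{1+\delta})$ for $x$ beyond some threshold (with $\gamma\neq0$ depending only on $U$), so up to a fixed multiplicative constant — again irrelevant after $\frac1T\log$ — it suffices to control $\cer_{U(\cdot^{1+\delta})}(1)$. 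For a wealth process $V\geq\nu$, the quantity $U(V_T^{1+\delta})$ is delicate because $V_T$ is unbounded; here is where Lemma \ref{lem:2U} and Assumption \ref{ass:U} do the real work. In the negative case, $U(x^{1+\delta})\geq U(x)$ for $x\geq1$ trivially gives one direction and monotonicity the convergence is one-sided and automatic; the positive case is where we need the extra hypothesis $\cer_G(1)<\infty$ with $G=U^{1+\delta}$. There, by concavity/asymptotic elasticity one shows $U(x^{1+\delta})\leq$ (something comparable to) $U(x)^{1+\delta'}$ for $x$ large with $\delta'\to0$ as $\delta\to0$, interpolate via Hölder-type estimate $\e[U(V_T)^{1+\delta'}]\leq\e[U(V_T)]^{\theta}\e[G(V_T)]^{1-\theta}$ or directly $\e[U(V_T^{1+\delta})]\leq \e[G(V_T)]$ for $\delta$ small enough that $1+\delta\leq$ the exponent in $G$, so that $\R_{U(\cdot^{1+\delta})}(V)\leq\frac{?}{?}$ is squeezed between $\R_U(V)$ and $\R_G(V)$, both finite. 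Taking $\delta\to0$ and invoking continuity of these rates in the exponent (and the already-known bound $\cer_U\leq\cer_{U(\cdot^{1+\delta})}$) forces $\cer_{U(\cdot^{1+\delta})}(1)\to\cer_U(1)$; in particular $\cer_U(1)<\infty$.

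Finally, for the numeraire version with $\cerd$: since $N_t\geq\underline N>0$ we have $N_tV_t\geq\underline N\nu$ whenever $V\geq\nu$, so $N_tV_t$ is again bounded below by a positive constant, and the sandwich hypothesis applies verbatim with $x=N_TV_T$; every step above — the restriction to $V$ with $NV\geq y$ via the second half of Lemma \ref{lem:2U}, the application of Lemma \ref{lemma:ae_u}, and the Hölder interpolation against $G$ — goes through word for word with $V_T$ replaced by $N_TV_T$, giving $\cerd_{U_n}(v_0)\to\cerd_U(v_0)$. The main obstacle I anticipate is the positive case of the convergence $\cer_{U(c\,\cdot^{1+\delta})}(1)\to\cer_U(1)$: turning the pointwise bound $U(cx^{1+\delta})\leq (\text{const})\,U(x^{1+\delta})$ into a statement about expectations requires controlling the upper tail of $V_T$ uniformly in $T$, and it is precisely to make this step legitimate that the hypothesis $\cer_G(1)<\infty$ with $G=U^{1+\delta}$ is imposed; the bookkeeping of which constant/exponent degrades to what and ensuring all the intermediate rates stay finite is the delicate part, whereas the negative case is essentially monotonicity and comes for free.
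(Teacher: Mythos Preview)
Your overall sandwich strategy --- reducing to wealth processes bounded below by $\nu$ via Lemma \ref{lem:2U}, then trapping $\cer_{U_n}$ between $\cer_U$ and $\cer_{U(c\,\cdot^{1+\delta})}$ --- matches the paper. But there is a sign error that leads you to dismiss the negative case too quickly. With the extended logarithm $\log x=-\log(-x)$ for $x<0$, the map $U\mapsto\R_U$ is monotone in \emph{both} sign cases (this is remarked explicitly in the paper): if $U_1\leq U_2$ are both negative then $\R_{U_1}(V)\leq\R_{U_2}(V)$. Hence the sandwich $c_1U\leq U_n\leq U(c\,\cdot^{1+\delta})$ gives $\cer_U(1)\leq\cer_{U_n}(1)\leq\cer_{U(c\,\cdot^{1+\delta})}(1)$ in \emph{both} cases, not the reversed order you state. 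Consequently, the negative case also requires you to show $\cer_{U(c\,\cdot^{1+\delta})}(1)\to\cer_U(1)$ from above, which is not at all ``automatic'': the trivial inequality $U(x^{1+\delta})\geq U(x)$ only reproduces the lower bound you already have.

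The paper handles both sign cases simultaneously by a device you do not mention: a cutoff at level $K^T$. One first proves (this is where the work goes) that for $K$ large, $\R_{U_n}(\xi)=\limsup_T\frac1T\log\e[U_n(\xi_T)\mathbf{1}_{\xi_T\leq K^T}]$, and likewise for $U$. On the event $\{\xi_T\leq K^T\}$ one has $\xi_T^{1+\delta}\leq K^{\delta T}\xi_T$, so Lemma \ref{lemma:ae_u} converts the nonlinear perturbation $x\mapsto cx^{1+\delta}$ into the multiplicative factor $(cK^{\delta T})^\gamma$, yielding $\R_U(\xi)\leq\R_{U_n}(\xi)\leq\R_U(\xi)+|\gamma|\delta\log K$ and hence the convergence. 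Establishing the cutoff is where the two cases differ: for $U<0$ it is almost immediate from Assumption \ref{ass:U}, since $|U(x)|\leq\tilde c x^{-\varepsilon}$ makes $\e[U_n(\xi_T)\mathbf{1}_{\xi_T\geq K^T}]$ decay like $K^{-\varepsilon T}$; for $U>0$ one needs Chebyshev plus H\"older against $G=U^{1+\delta}$, which is precisely where the hypothesis $\cer_G(1)<\infty$ enters. Your proposed route for the positive case --- bound $U(x^{1+\delta})\leq C\,U(x)^{1+\delta'}$ via asymptotic elasticity and then interpolate $\e[U^{1+\delta'}]$ between $\e[U]$ and $\e[G]$ by H\"older --- can be made to work and is a legitimate alternative to the cutoff, but it does not transfer to the negative case (powers of a negative $U$ are problematic, and the analogous pointwise bound goes the wrong way), so you will still need something like the cutoff argument there.
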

\begin{proof}

We prove the statement for $\cer$ and $\cerd$ simultaneously. They follow respectively by taking $\xi = V$ and $\xi=VN$, $V\in \A(1)$, in what follows. When considering the latter the assumption that $N$ is bounded away from zero is in place. 
Observe that, by Lemma \ref{lem:2U} it is sufficient to consider $\xi\geq \nu$.

Assume that for $n$ and $K$ large enough and any $\xi \geq \nu$ we have
\begin{equation}
\begin{split}
\label{eq:R_cutoff}
\R_{ U_n}(\xi)&= \limsup_{T \rightarrow \infty} \frac{1}{T} \log \e [U_n\left(\xi_T\right)]
= \limsup_{T \rightarrow \infty} \frac{1}{T} \log \e [ U_n\left(\xi_T\right) \mathbf{1}_{\xi_T \leq K^T}], \\
\R_{ U}(\xi)&= \limsup_{T \rightarrow \infty} \frac{1}{T} \log \e [U\left(\xi_T\right)]
= \limsup_{T \rightarrow \infty} \frac{1}{T} \log \e [ U\left(\xi_T\right) \mathbf{1}_{\xi_T \leq K^T}].
\end{split}
\end{equation}
Take $\delta>0$, large $K,T,n$ so that the assumptions yield
\begin{equation*}
\begin{split}
  c_1U(\xi_T)\mathbf{1}_{\xi_T \leq K^T}\leq  U_n(\xi_T)\mathbf{1}_{\xi_T \leq K^T}& \leq U(c\xi_T^{1+\delta})\mathbf{1}_{\xi_T \leq K^T}\leq U(cK^{\delta T}\xi_T)\mathbf{1}_{\xi_T \leq K^T}\\
   &\leq (cK^{\delta T})^{\gamma} U(\xi_T)\mathbf{1}_{\xi_T \leq K^T},
\end{split}
\end{equation*}
where we used Lemma \ref{lemma:ae_u} to obtain the last inequality. Recall that we defined $\log x = - \log(-x)$ for $x<0$. Taking expectation, applying $\frac{1}{T}\log$ and taking the limit as $T\to\infty$ in the above we conclude, thanks to \eqref{eq:R_cutoff}, that 
$$\R_U(\xi)\leq \R_{U_n}(\xi)\leq \R_U(\xi)+|\gamma|\delta \log K.$$
This is true for $n$ large enough and any $\xi$ and hence also when we take supremum over $\xi$. We deduce \eqref{eq:cnv_cer} taking $\delta \to 0$.

It remains to argue \eqref{eq:R_cutoff}. We will prove this separately for positive and negative $U$. Consider first $U_n,U\geq 0$. Assumption \ref{ass:U} implies that there exist $\tilde{c} >0$ and $\varepsilon > 0$ such that $\nu \leq x < \tilde{c}U(x)^{1/\varepsilon}$. For any $\delta'>0$, using Lemma \ref{lemma:ae_u}, we obtain
\begin{eqnarray*}
U(x^{1+\delta'})^{1+\delta'}  \leq & U\left(\left(\frac{x}{\nu}\right)^{\delta'} x\right)^{1+\delta'}\leq \left(\left(\frac{x}{\nu}\right)^{\gamma\delta'}U(x)\right)^{1+\delta'}\\
  \leq & \left(\frac{\tilde{c}}{\nu}\right)^{\gamma \delta' (1+\delta')}U(x)^{1+\delta'(1+\frac{\gamma'(1+\delta')}{\varepsilon})}, \quad x \geq \nu.
\end{eqnarray*}
From the proof of Lemma \ref{lemma:ae_u} it is clear that $U$ and $\gamma$ have the same sign and we conclude that for some $\delta'\leq \delta$ we have 
\begin{equation} \label{eqn:U_pow_delta}
U(x^{1+\delta'})^{1+\delta'} \leq c_2 U(x^{1+\delta})=c_2 G(x), \quad x \geq \nu,
\end{equation}
for some $c_2 > 0$. \\
Using Chebyshev's inequality we obtain
\begin{equation*}
\p\left(\xi_T \geq K^T\right) \leq \frac{\e U_n(\xi_T)}{U_n(K^T)} \leq \frac{\e U_n(\xi_T)}{\tilde{c}^{-\varepsilon} c_1 K^{\varepsilon T}}.
\end{equation*}
In the last inequality we used again the fact that $U_n(x)\geq c_1 U(x)$ for $x\geq \nu$ and that $U(x) \geq \tilde{c}^{-\varepsilon} x^{\varepsilon}$. Take $n>n_{\delta'}$ with $\delta'$ as in \eqref{eqn:U_pow_delta}. Combining the above and using twice H\"{o}lder's inequality with $p=1+\delta'$, $1/p+1/q=1$, we obtain
\begin{equation*}
\begin{split}
  \e[U_n(\xi_T) \mathbf{1}_{\xi_T \geq K^T}]  & \leq \left(\e U_n(\xi_T)^p\right)^{\frac{1}{p}}\p(\xi_T\geq K^T)^{\frac{1}{q}} \leq
  \left(\e U_n(\xi_T)^p\right)^{\frac{1}{p}}\left( \frac{\e U_n(\xi_T)}{\tilde{c}^{-\varepsilon} c_1 K^{\varepsilon T}}\right)^{\frac{1}{q}}\\
    & \leq \frac{\left(\e U_n(\xi_T)^p\right)^{\frac{1}{p}\left(1+\frac{1}{q}\right)}}{\tilde{c}^{-\varepsilon/q} c_1^{1/q}K^{\varepsilon T/q}}
    \leq \frac{\left(\e U(c\xi^{\delta'+1}_T)^{1+\delta'}\right)^{\frac{1}{p}\left(1+\frac{1}{q}\right)}}{\tilde{c}^{-\varepsilon/q} c_1^{1/q}K^{\varepsilon T/q}}.
\end{split}
\end{equation*}
Let $C_G$ denote $\cer_{G}(1)$ or $\cerd_G(1)$ depending on whether we consider $\xi=V$ or $\xi=VN$.
Let $\gamma'$ be the constant resulting from Lemma \ref{lemma:ae_u} applied with $x_0=v^{1+\delta'}$. We can then continue the above chain of inequalities
\begin{equation*}
\begin{split}
\frac{\left(\e U(c\xi^{\delta'+1}_T)^{1+\delta'}\right)^{\frac{1}{p}\left(1+\frac{1}{q}\right)}}{\tilde{c}^{-\varepsilon/q} c_1^{1/q}K^{\varepsilon T/q}}& \leq \frac{\left(\max\{c,1\}^{\gamma'}\e U(\xi^{\delta'+1}_T)^{1+\delta'}\right)^{\frac{1}{p}\left(1+\frac{1}{q}\right)}}{\tilde{c}^{-\varepsilon/q} c_1^{1/q}K^{\varepsilon T/q}}\\
&\leq  c_3\frac{\exp((C_G+\kappa)(1/p+1/pq)T)}{K^{\varepsilon T/q}}\\
& = c_3 \exp\left(\left((C_G+\kappa)\left(\frac{1}{p}+\frac{1}{pq}\right)-\frac{\varepsilon}{q}\log K\right)T\right),
\end{split}
\end{equation*}
where to get the second inequality we used \eqref{eqn:U_pow_delta} and the fact that for any $\kappa>0$, for $T$ large enough, we have $\e G(\xi_T)\leq \exp((C_G(1)+\kappa)T)$. $c_3$ is a positive constant which is can be made explicit from the above computation. For $K$ large enough, the above is decreasing exponentially in $T$. Combining the two displays above we conclude that for any $\kappa>0$, $n>n_{\delta'}$, $K$ large enough and all $T$ large enough we have $\e[U_n(\xi_T) \mathbf{1}_{\xi_T \geq K^T}]\leq \kappa$ and hence
$$ \e [U_n(\xi_T)]\geq \e[U_n(\xi_T) \mathbf{1}_{\xi_T \leq K^T}] \geq \e [U_n(\xi_T)]-\kappa\geq \e [U_n(\xi_T)]\left(1-\frac{\kappa}{c_1 U(\nu)}\right),
$$
where we wrote $\kappa=\frac{\kappa}{\e [U_n(\xi_T)]}\e [U_n(\xi_T)]\leq \frac{\kappa}{U_n(\nu)}\e [U_n(\xi_T)]$ and used the assumption $U_n\geq c_1U$. The first equality in \eqref{eq:R_cutoff} now follows by taking expectations, applying $\frac{1}{T}\log$ and letting $T\to \infty$. Analogous, but simplified, arguments to the above yield the second equality in \eqref{eq:R_cutoff}. 

It remains to show \eqref{eq:R_cutoff} when $U_n, U < 0$. We detail the arguments for $U_n$ and the first equality in \eqref{eq:R_cutoff}.
Obviously $0\geq U_n\left(\xi_T\right) \mathbf{1}_{\xi_T \leq K^T}\geq U_n\left(\xi_T\right)$ so  \eqref{eq:R_cutoff} holds if $\zeta_n:=\R_{U_n}(\xi)=\infty$. Assume now that $\zeta_n<\infty$ and note also that $\zeta_n \geq 0$ since $\xi \geq \nu$.  
Using Assumption \ref{ass:U} on $U$ we see that there exists $\varepsilon>0$ such that $0>U(x)\geq -\tilde c x^{-\varepsilon}$, $x\geq \nu$. This yields 
$\e[U_n(\xi_T) \mathbf{1}_{\xi_T \geq K^T}] \geq c_1 U(K^T) \geq - c_1\tilde c K^{-T \varepsilon}$.
It follows that
\begin{equation*}
\begin{split}
\e [U_n(\xi_T)] &\leq \e [U_n(\xi_T) \mathbf{1}_{\xi_T \leq K^T}] \leq \e [U_n(\xi_T)] + c_1\tilde c K^{-T \varepsilon}\\
&= \e [U_n(\xi_T)] \left(1 - \frac{c_1\tilde c K^{-T \varepsilon}}{\e |U_n(\xi_T)|}\right) \leq \e [U_n(\xi_T)] (1 - c_4\mathrm{e}^{-(\varepsilon \ln  K - \zeta_n-\kappa)T}),
\end{split}
\end{equation*}
where $c_4=c_1\tilde c$, we took $\kappa>0$ arbitrary and $T$ large enough. Taking $K>\exp((\zeta_n+\kappa)/\varepsilon)$, applying $\frac{1}{T}\log$ and letting $T\to\infty$ we see that \eqref{eq:R_cutoff} holds.

\mbox{}\hfill $\square$

\end{proof}

The following Lemma is a slight extension of the first part of Lemma 6.3 in Kramkov and Schachermayer \cite{KrSch}. 
\begin{lemma} \label{lemma:ae_u} Let $U: (0,\infty) \rightarrow \mathbb{R}$ be a continuous nondecreasing function, either strictly positive or strictly negative, with a well defined and locally bounded right derivative and which satisfies \eqref{eq:AEU_appendix}. Then for any $x_0 > 0$ there exists $\gamma \in \mathbb{R}\setminus\{0\}$ such that 
\begin{equation*}
U(x) \leq U(\lambda x) \leq \lambda ^ {\gamma} U(x) \quad \text{ for all } \lambda > 1,\ x\geq x_0.
\end{equation*}
\end{lemma}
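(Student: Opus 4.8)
\emph{Proof proposal.} The plan is to deduce the two-sided bound from the monotonicity of a single power-weighted auxiliary function, avoiding any appeal to absolute continuity of $U$. First I would upgrade the asymptotic bound \eqref{eq:AEU_appendix} to a bound that is uniform on all of $[x_0,\infty)$. Indeed, \eqref{eq:AEU_appendix} provides finite $A$ and some $X>x_0$ with $xU'_+(x)/|U(x)|\le A$ for $x\ge X$; on the compact interval $[x_0,X]\subset(0,\infty)$ the right derivative $U'_+$ is bounded, by local boundedness, while $|U|$ is continuous and strictly positive and hence bounded below by some $m>0$, so $xU'_+(x)/|U(x)|\le X\sup_{[x_0,X]}U'_+/m<\infty$ there. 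Combining, there is a finite constant $\Gamma\ge 0$ with
\[
\frac{xU'_+(x)}{|U(x)|}\le \Gamma,\qquad x\ge x_0,
\]
where $\Gamma\ge 0$ because $U$ nondecreasing forces $U'_+\ge 0$.

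Next I would set $\gamma:=\Gamma+1>0$ if $U>0$ and $\gamma:=-(\Gamma+1)<0$ if $U<0$, so in both cases $\gamma\neq 0$, and consider $h(x):=x^{-\gamma}U(x)$ on $[x_0,\infty)$. Since $x\mapsto x^{-\gamma}$ is $C^1$ and $U$ has an everywhere-defined right derivative, so does $h$, and
\[
h'_+(x)=x^{-\gamma-1}\bigl(xU'_+(x)-\gamma U(x)\bigr)=x^{-\gamma-1}U(x)\left(\frac{xU'_+(x)}{U(x)}-\gamma\right).
\]
If $U>0$ the bracket is at most $\Gamma-\gamma=-1<0$ and $U(x)>0$, so $h'_+(x)\le 0$. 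If $U<0$ then $xU'_+(x)/U(x)=-xU'_+(x)/|U(x)|\in[-\Gamma,0]$, hence the bracket is at least $-\Gamma-\gamma=1>0$ while $U(x)<0$, so again $h'_+(x)\le 0$. Thus in either case $h$ is continuous on $[x_0,\infty)$ with right derivative everywhere $\le 0$.

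I would then invoke the classical real-analysis fact that a continuous function on an interval whose right (Dini) derivative is everywhere nonpositive is nonincreasing; this is proved by a short contradiction argument applied to $t\mapsto h(t)-h(x)-c(t-x)$ at the last point in $[x,y]$ where it vanishes, and crucially it requires no absolute continuity of $U$, only that $U$ (hence $h$) is continuous with an everywhere-defined right derivative. Applying this to $h$ gives $h(\lambda x)\le h(x)$ for all $\lambda>1$, $x\ge x_0$, i.e. $(\lambda x)^{-\gamma}U(\lambda x)\le x^{-\gamma}U(x)$, which after multiplying through by the positive quantities $x^{\gamma}$ and $\lambda^{\gamma}$ becomes $U(\lambda x)\le\lambda^{\gamma}U(x)$ (the sign of $\gamma$ is irrelevant here since $\lambda^{\gamma}>0$). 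The remaining inequality $U(x)\le U(\lambda x)$ is immediate from monotonicity of $U$, which closes the chain. I expect the only genuinely delicate points to be the sign bookkeeping in the $U<0$ case and the passage from ``$h'_+\le 0$ everywhere'' to ``$h$ nonincreasing''; the latter Dini-derivative step is the one I would state most carefully, as it is precisely what lets the lemma hold under the stated hypotheses without assuming $U$ absolutely continuous.
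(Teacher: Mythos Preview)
Your proof is correct and is essentially the same approach as the paper's: both first upgrade \eqref{eq:AEU_appendix} to a uniform bound on $[x_0,\infty)$, choose $\gamma$ of the same sign as $U$ so that $xU'_+(x)<\gamma U(x)$ there, and then deduce the inequality by a derivative-comparison/crossing argument. Your auxiliary function $h(x)=x^{-\gamma}U(x)$ being nonincreasing is exactly the paper's statement $F(\lambda)=U(\lambda x)\le \lambda^{\gamma}U(x)=G(\lambda)$ rewritten, and the Dini-derivative lemma you cite is precisely what the paper proves inline via its contradiction at the first crossing point $\lambda^*$.
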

 
\begin{proof} Let $x_0>0$. 
From \eqref{eq:AEU_appendix}, the fact that $U$ is monotone and of constant sign, and $U'_+$ is locally bounded, there exists non-zero $\gamma \in \mathbb{R}$ such that 
\begin{equation*}
 \frac{xU'_+(x)}{\gamma U(x)} < 1 \quad x \geq x_0,
\end{equation*}
where $\gamma$ has the same sign as $U$.

Fix $x \geq x_0$ and define functions $F(\lambda) := U(\lambda x)$ and $G(\lambda) := \lambda^{\gamma} U(x)$ for $\lambda > 1$.
Then, $F(1) = G(1)$ and $F'_+(1) = xU'_+(x) < \gamma U(x) = G'_+(1)$. Hence, $F(\lambda) < G(\lambda)$ for $\lambda \in (1, 1+\varepsilon)$ for some $\varepsilon > 0$. Assume that $F(\lambda) > G(\lambda)$ for some $\lambda\in (1,\infty)$ then from continuity
of $F$ and $G$ there exists a point $\lambda^* >1$ such that $F(\lambda^*) = G(\lambda^*)$ and $F'_+(\lambda^*) \geq G'_+(\lambda^*)$, but
\begin{equation*}
 F'_+(\lambda^*) = x U'_+(\lambda^* x) < \frac{\gamma}{\lambda^*}U(\lambda^* x) = \frac{\gamma}{\lambda^*} F(\lambda^*) = \frac{\gamma}{\lambda^*} G(\lambda^*) = G'_+(\lambda^*),
\end{equation*}
 which gives us a contradiction.\hfill $\square$
\end{proof}

\begin{lemma}\label{lem:Uasel}
Suppose $U$ is a utility function which satisfies the first condition of Assumption \ref{ass:UwS0}.  Then for any $x_0>0,\varepsilon>0$, there exist $c_-,c_+>0$ such that
$$c_- \powut{\gamma(1-\varepsilon)}(x)\leq U(x)\leq c_+ \powut{\gamma(1+\varepsilon)}(x),\quad x\geq x_0.$$
\end{lemma}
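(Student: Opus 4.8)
\textbf{Proof proposal for Lemma \ref{lem:Uasel}.}

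The plan is to translate the hypothesis $\frac{xU'_+(x)}{U(x)}\to\gamma$ into a two-sided asymptotic elasticity bound and then feed this into Lemma \ref{lemma:ae_u}. First I would observe that the convergence $\frac{xU'_+(x)}{U(x)}\to\gamma\in(-\infty,1)\setminus\{0\}$ means that, given $\varepsilon>0$, there is some $x_1\geq x_0$ such that for all $x\geq x_1$ both
\[
\gamma(1+\varepsilon)\leq \frac{xU'_+(x)}{U(x)}\leq \gamma(1-\varepsilon)\quad\text{if }\gamma<0,
\]
with the inequalities reversed if $\gamma>0$; in either case the function $U$ inherits a finite \emph{and} strictly positive asymptotic elasticity, which in particular gives \eqref{eq:AEU_appendix}. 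The key computation is then to differentiate $x\mapsto \log|U(x)| - \gamma(1+\varepsilon)\log x$ (respectively with $1-\varepsilon$): its right derivative is $\frac{U'_+(x)}{U(x)} - \frac{\gamma(1+\varepsilon)}{x}$, which has a definite sign for $x\geq x_1$ by the display above. Hence $|U(x)|/x^{\gamma(1+\varepsilon)}$ is monotone on $[x_1,\infty)$ in a direction that yields the upper bound, and $|U(x)|/x^{\gamma(1-\varepsilon)}$ is monotone in the direction that yields the lower bound.

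Concretely, for the upper bound I would integrate the inequality on $U'_+/U$ from $x_1$ to $x$ to get $\log|U(x)| - \log|U(x_1)| \leq \gamma(1+\varepsilon)(\log x - \log x_1)$ (signs tracked according to whether $\gamma$ is positive or negative, and recalling $\powut{p}(x)=x^p/p$ so that the sign of $\powut{\gamma(1\pm\varepsilon)}$ matches the sign of $\gamma(1\pm\varepsilon)$, which for $\varepsilon$ small matches the sign of $\gamma$, hence of $U$). This gives $|U(x)|\leq C_1 x^{\gamma(1+\varepsilon)}$ on $[x_1,\infty)$ for an explicit $C_1$, i.e. $U(x)\leq c_+\powut{\gamma(1+\varepsilon)}(x)$ there; symmetrically $U(x)\geq c_-\powut{\gamma(1-\varepsilon)}(x)$ on $[x_1,\infty)$. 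To extend from $[x_1,\infty)$ down to $[x_0,\infty)$ I would invoke Lemma \ref{lemma:ae_u} (which applies since \eqref{eq:AEU_appendix} holds): it gives $\gamma'\neq 0$ with $U(x)\leq \lambda^{\gamma'}U(x)$... more precisely, for $x\in[x_0,x_1]$, $U$ is continuous of constant sign on a compact interval, hence bounded away from $0$ and $\infty$, while $\powut{\gamma(1\pm\varepsilon)}(x)$ is likewise bounded there, so enlarging the constants $c_\pm$ by a finite factor absorbs this bounded range. This last step is purely a compactness argument and requires no elasticity input.

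The main obstacle, such as it is, is bookkeeping of signs: $U$ and $\gamma$ have the same sign, the direction of every inequality flips when passing through division by $U$ or by $\gamma$, and one must check that for $\varepsilon$ small enough $\gamma(1+\varepsilon)$ and $\gamma(1-\varepsilon)$ both stay in $(-\infty,1)\setminus\{0\}$ and on the same side of $0$ as $\gamma$ — so that $\powut{\gamma(1\pm\varepsilon)}$ is a genuine power utility of the correct sign and the comparison $c_-\powut{\gamma(1-\varepsilon)}(x)\leq U(x)\leq c_+\powut{\gamma(1+\varepsilon)}(x)$ is consistent (for $\gamma<0$ the exponents satisfy $\gamma(1-\varepsilon)>\gamma(1+\varepsilon)$, and both $U$ and the powers are negative, so the chain is correct; for $\gamma\in(0,1)$ everything is positive and $\gamma(1+\varepsilon)>\gamma(1-\varepsilon)$, again consistent). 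Beyond that, the argument is a routine Gronwall-type integration of the logarithmic derivative plus a compactness patch, and uses only Lemma \ref{lemma:ae_u} from the material above.
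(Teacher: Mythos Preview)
Your proposal is correct and follows essentially the same route as the paper: both arguments use the convergence $\frac{xU'_+(x)}{U(x)}\to\gamma$ to trap the logarithmic derivative between $\gamma(1-\varepsilon)/x$ and $\gamma(1+\varepsilon)/x$ on $[x_1,\infty)$, integrate to obtain the power bounds there (the paper writes this as $U(x)=U(y_0)\exp\{\int_{y_0}^x \frac{yU'_+(y)}{U(y)}\frac{\td y}{y}\}$, which is exactly your integration of $(\log|U|)'$), and then patch the compact interval $[x_0,x_1]$ by taking the min/max of the continuous ratios $U/\powut{\gamma(1\pm\varepsilon)}$. Your initial plan to ``feed this into Lemma \ref{lemma:ae_u}'' is a detour you rightly abandon; neither that lemma nor any elasticity input is needed for the compactness step, as you note yourself.
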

\begin{proof}
Let us consider $U>0$, the case of $U<0$ being entirely analogous. 
The assumption on $U$ means that for any $\epsilon>0$, there exists $y_0>0$ such that
\begin{equation*}
\frac{yU'_+(y)}{U(y)} \in \left(\gamma (1- \varepsilon), \gamma (1+ \varepsilon)\right) \text{ for  } y \in [y_0, \infty).
\end{equation*}
For $x\geq y_0$ we express $U(x)$ as
\begin{equation*}
 U(x)= U(y_0)\exp \left\{ \int_{y_0}^{x} \frac{yU'_+(y)}{U(y)}\frac{\td y}{y}  \right\}, \label{eq:u_rep}
\end{equation*}
which gives establishes the claim for $x\geq y_0$ with $c_\pm=U(y_0)/\powut{-\gamma(1\pm\varepsilon)}(y_0)$.
It follows that the claim holds for $x\geq x_0$ for any $x_0>0$ with
\begin{equation*}
\begin{split}
&c_-:= \min \left\{ \frac{U(x)}{\powut{\gamma(1-\varepsilon)}(x)}, x \in [x_0, y_0]\right\}, \\& c_+:= \max \left\{ \frac{U(x)}{\powut{\gamma(1+\varepsilon}(x))}, x \in [x_0, y_0]\right\}.
\end{split}
\end{equation*}
\hfill $\square$
\end{proof}
\textbf{Acknowledgement.} It is our pleasure to thank anonymous referee and Associated Editor at \emph{Finance and Stochastics} whose comments helped us greatly to improve the paper. Jan Ob\l\'oj is grateful to Nicole El Karoui for the stimulating discussions they had when working on \cite{CEO} and from which the initial ideas for this paper  originated.
\bibliographystyle{acm}
\bibliography{bibliography_nolinks}

\end{document}